\theoremstyle{plain}
 \newtheorem{thm}{Theorem}[section]
 \newtheorem{prop}{Proposition}[section]
 \newtheorem{remark}{Remark}[section]
\theoremstyle{definition}
 \newtheorem{exm}{Example}[section]
 \newtheorem{dfn}{Definition}[section]
\theoremstyle{remark}
 \numberwithin{equation}{section}
 \newcommand{\be}{\begin{equation}}
\newcommand{\ee}{\end{equation}}
\renewcommand{\leq}{\leqslant}
\renewcommand{\geq}{\geqslant}
\title[On some integral means]{On Some Integral means}
\subjclass[2010]{26E60; 26D15}
\keywords{Mean function, integral mean, harmonic mean, arithmetic complimentary}
\author[Khoshnasib-Zeinabad]{\bfseries Fariba Khoshnasib-Zeinabad} 
\address{ 
Department of Mathematics \\ % \hfill (Received 00 00 201?)\\
Earlham College \\ %\hfill (Revised  00 00 201?)\\
Richmond, IN\\
United States}
\email{khoshfa@earlham.edu}
\author[Mehrabi]{\bfseries Mohammadhossein Mehrabi}
\address{ 
Department of Mathematics \\ % \hfill (Received 00 00 201?)\\
Iran University of Science and Technology  \\ %\hfill (Revised  00 00 201?)\\
Province of Tehran \\
Iran}
\email{mhmehrabi@gmail.com}
\thanks{Communicated by ...} 
\begin{document}

{\begin{flushleft}\baselineskip9pt\scriptsize
%PUBLICATIONS DE L'INSTITUT MATH\'EMATIQUE\newline
%Nouvelle s\'erie, tome ??(1??)) (201?), od--do \hfill DOI: \\

\end{flushleft}}
\vspace{18mm} \setcounter{page}{1} \thispagestyle{empty}

\begin{abstract}
 \sloppy Harmonic, Geometric, Arithmetic, Heronian and Contra-harmonic means have been studied by many mathematicians. In 2003, H. Eves studied these means from geometrical point of view and established some of the inequalities between them in using a circle and its radius. In 1961, E. Beckenback and R. Bellman introduced several inequalities corresponding to means. In this paper, we will introduce the concept of mean functions and integral means and give bounds on some of these mean functions and integral means.
\end{abstract}
\maketitle

\section{Introduction}
In their book of inequalities, Beckenback and  Bellman established several inequalities between arithmetic, harmonic and contra-harmonic means\cite{mbook}. These means are defined in the following paragraph, based on the original text by Eve\cite{main}.\\
Let $a,b>0$ and $a\neq b.$ Putting together the results from works of several mathematicians, in particular Taneja established that$\
\max\{a,b\}>C>r>g>A>Hn>G>H>\min\{a,b\}$ in \cite{7means} and \cite{7means2}, where
$C=\frac{a^2+b^2}{a+b}$ is contraharmonic mean,$\
r=\sqrt{\frac{a^2+b^2}{2}}$ is root square mean,
$g=\frac{2(a^2+ab+b^2)}{3(a+b)}$ is gravitational mean (also called centroidal mean),
$A=\frac{a+b}{2}$ is arithmetic mean, $Hn=\frac{a+\sqrt{ab}+b}{3}$
is Heronian mean, $G=\sqrt{ab}$ is geometric mean and
$H=\frac{2ab}{a+b}$ is harmonic mean of $a$ and $b$.\\In this
paper we introduce the notion of a mean function and utilize it to define some integral means of $a$ and $b$ and then we establish 
some inequalities corresponding to those mean functions and integral means.
\section{Definitions and Main Theorems}
All the means that appear in this paper are functions F with conditions a and b satisified:\\
a) $F:\mathbb{R}_{+}^2\to\mathbb{R}_{+},$\ \ where $\min\{x,y\}\leq
F(x,y)\leq\max\{x,y\},$Provided that $(x,y)\in\mathbb{R}_{+}^2$,\\ b) $F(x,y)=F(y,x),$ such that $(x,y)\in\mathbb{R}_{+}^2.$ Consequently,  $F(x,x)=x$ where  $x\in\mathbb{R}_{+}$.\\ We say $F$ is a
mean function when the two above conditions are satisfied.\\All throughout the paper we are assuming that $a,b>0$ and without loss of generality can assume
 $b\geq a$, by symmetry.\\\\
 \begin{dfn} \it Let $M$ be a mean of $a$ and
 $b$. We define $M_A:=2A-M.$ to be $A$-complementary (arithmetic complementary) of $M$. \end{dfn}
 
 It is obvious that $M_A$ and $M_G$ are means of $a$ and $b$.
 
 \begin{thm}
 Let $M\in\mathcal{R}(\mathbb{R}_{+}^2)$ be a mean function.
Then\\
(i)\[\mathcal{I}_M:=\mathcal{I}_M(a,b):=\begin{cases}
    \frac{1}{(b-a)^2}\int_a^b\int_a^b M(x,y)\ dxdy, &\text{if $a\neq b$}, \\
    a, & \text{if $a=b$} \
\end{cases}\]  \\
is a mean of $a$ and $b$,\\\\ (ii) $\mathcal{J}_M:=\mathcal{J}_M(a,b):=3\mathcal{I}_M(a,b)-2A(a,b)$ is a mean of $a$ and  $b$ and finally \\\\(iii) $\frac{2}{3}A<\mathcal{I}_M<\frac{4}{3}A.$
\end{thm}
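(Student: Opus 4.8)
The plan is to reduce all three parts to a single explicit computation: the double integrals of $\min\{x,y\}$ and $\max\{x,y\}$ over the square $[a,b]^2$.

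\emph{Part (i).} Suppose first $a\neq b$, say $a<b$. For every $(x,y)\in[a,b]^2$ we have $a\le\min\{x,y\}$ and $\max\{x,y\}\le b$, so condition (a) in the definition of a mean function gives $a\le M(x,y)\le b$ pointwise on the square. Integrating this inequality over $[a,b]^2$ and dividing by the area $(b-a)^2$ yields $a\le\mathcal{I}_M(a,b)\le b$. The symmetry $\mathcal{I}_M(a,b)=\mathcal{I}_M(b,a)$ follows either because reversing the limits in both iterated integrals flips the sign twice while $(b-a)^2=(a-b)^2$, or by the change of variables $(x,y)\mapsto(y,x)$ together with $M(x,y)=M(y,x)$. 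For $a=b$ the value is $a$ by definition, which is a mean. This proves (i).

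\emph{Key computation.} Splitting $[a,b]^2$ along the diagonal and using the symmetry of $\min$,
\[
\int_a^b\!\!\int_a^b \min\{x,y\}\,dx\,dy \;=\; 2\int_a^b\!\Big(\int_a^y x\,dx\Big)dy \;=\; \int_a^b (y^2-a^2)\,dy \;=\; \frac{(b-a)^2(2a+b)}{3},
\]
so $\mathcal{I}_{\min}(a,b)=\frac{2a+b}{3}$. Since $\min\{x,y\}+\max\{x,y\}=x+y$ and $\int_a^b\!\int_a^b(x+y)\,dx\,dy=(b-a)^2(a+b)$, we obtain $\mathcal{I}_{\max}(a,b)=\frac{a+2b}{3}$. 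Because $\min\{x,y\}\le M(x,y)\le\max\{x,y\}$ on the square, monotonicity of integration gives
\[
\frac{2a+b}{3}\;\le\;\mathcal{I}_M(a,b)\;\le\;\frac{a+2b}{3}\qquad(a<b),
\]
and this holds trivially, with equalities, when $a=b$.

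\emph{Parts (ii) and (iii).} Since $2A(a,b)=a+b$, multiplying the key bound by $3$ and subtracting $a+b$ turns it into $a\le 3\mathcal{I}_M(a,b)-2A(a,b)\le b$, i.e.\ $\min\{a,b\}\le\mathcal{J}_M\le\max\{a,b\}$; for $a=b$ one has $\mathcal{J}_M=3a-2a=a$, and symmetry of $\mathcal{J}_M$ is inherited from that of $\mathcal{I}_M$ and $A$, proving (ii). For (iii), write $\frac{2a+b}{3}=\frac{2}{3}A+\frac{a}{3}$ and $\frac{a+2b}{3}=\frac{4}{3}A-\frac{a}{3}$; the key bound then gives $\mathcal{I}_M\ge\frac23 A+\frac a3>\frac23 A$ and $\mathcal{I}_M\le\frac43 A-\frac a3<\frac43 A$, the strict inequalities coming from $a>0$ (and for $a=b$ one checks $\frac23 A=\frac{2a}{3}<a<\frac{4a}{3}=\frac43 A$ directly). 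The only real content is the displayed integral of $\min\{x,y\}$; everything else is monotonicity of the integral and the identity $2A=a+b$. The one point that needs care is the \emph{strictness} in (iii), which forces one to keep the sharp bounds $\mathcal{I}_{\min},\mathcal{I}_{\max}$ and to use the hypothesis $a>0$, rather than the crude bounds $a\le\mathcal{I}_M\le b$ from part (i).
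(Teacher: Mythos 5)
Your proposal is correct and follows essentially the same route as the paper: both hinge on the sandwich $\frac{2a+b}{3}\le\mathcal{I}_M\le\frac{a+2b}{3}$ obtained by integrating $\min\{x,y\}\le M(x,y)\le\max\{x,y\}$ over the square, after which (ii) and (iii) are the same algebra. Your evaluation of the two extreme integrals (diagonal symmetry plus $\min+\max=x+y$, versus the paper's direct iterated integration) and your phrasing of (iii) via $\frac{2}{3}A+\frac{a}{3}$ instead of dividing by $A$ are only cosmetic variations.
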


\begin{proof} Let
$b>a$.\\(i):\[\min\{a,b\}=a<\frac{2a+b}{3}=\frac{1}{(b-a)^2}\int_a^b\int_a^xy\
dydx+\frac{1}{(b-a)^2}\int_a^b\int_x^bx\
dydx=\]\[\frac{1}{(b-a)^2}\int_a^b\int_a^b\min\{x,y\}\
dxdy\leq\mathcal{I}_M(a,b)\leq\frac{1}{(b-a)^2}\int_a^b\int_a^b\max\{x,y\}\
dxdy\]\[=\frac{1}{(b-a)^2}\int_a^b\int_a^xx\
dydx+\frac{1}{(b-a)^2}\int_a^b\int_x^by\
dydx=\frac{a+2b}{3}<b=\max\{a,b\}.\]Also, it is obvious that$\
\mathcal{I}_M\ $is symmetric.\\(ii):\\By proof of (i), we
have$\frac{2a+b}{3}\leq\mathcal{I}_M(a,b)\leq\frac{a+2b}{3}.\label{2}
So,$\[a\leq3\mathcal{I}_M(a,b)-(a+b)\leq b.\]$\\(iii):
$\[\frac{2}{3}<\frac{2(2a+b)}{3(a+b)}\leq\frac{\mathcal{I}_M(a,b)}{A(a,b)}\leq\frac{2(a+2b)}{3(a+b)}<\frac{4}{3},\]
multiplying by $A(a,b)$ we get the result.
\end{proof}
\begin{prop}
Let $M,M_1
$and$ M_2$ be mean functions and $\lambda\in\mathbb{R}$.
Then\\\\(i)
$M_1>M_2\Rightarrow\mathcal{I}_{M_{1}}(a,b)>\mathcal{I}_{M_{2}}(a,b)\
\ $and$\ \ \mathcal{J}_{M_{1}}(a,b)>\mathcal{J}_{M_{2}}(a,b),\ \
a\neq b,$\\\\(ii) $\mathcal{I}_{\lambda M_{1}+(1-\lambda)
M_{2}}=\lambda\mathcal{I}_{M_{1}}+(1-\lambda)\mathcal{I}_{M_{2}},\
$if$\ \lambda M_{1}+(1-\lambda) M_{2}\ $is a mean
function.\\\\In particular, $\mathcal{I}_{M_{A}}=(\mathcal{I}_M)_{A}\
$ and$\ \
\mathcal{I}_{\mathcal{J}_{M}}=\mathcal{J}_{\mathcal{I}_{M}}$.
\end{prop}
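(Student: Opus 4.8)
The plan is to dispatch (i) directly from monotonicity of the double integral, to get (ii) from linearity of the integral, and then to obtain both displayed ``in particular'' identities as formal consequences of (ii) combined with the single observation $\mathcal{I}_A=A$.

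First I would handle (i). Monotonicity of the double integral gives $\mathcal{I}_{M_1}(a,b)\ge\mathcal{I}_{M_2}(a,b)$ as soon as $M_1(x,y)\ge M_2(x,y)$ on $[a,b]^2$. To upgrade this to a strict inequality under the hypothesis $M_1>M_2$ — which, since $M_i(x,x)=x$, can only mean $M_1(x,y)>M_2(x,y)$ for $x\neq y$ — I would set $f:=M_1-M_2$, a nonnegative function on $[a,b]^2$ that is bounded (both means lie between $\min$ and $\max$, so between $a$ and $b$) and Riemann integrable, hence continuous at a.e.\ point; choosing a point of continuity $(x_0,y_0)$ with $x_0\neq y_0$, the value $f(x_0,y_0)>0$ propagates to a small sub-box on which $f\ge\frac12 f(x_0,y_0)$, whence $\int_a^b\int_a^b f>0$ and therefore $\mathcal{I}_{M_1}(a,b)>\mathcal{I}_{M_2}(a,b)$ for $a\neq b$. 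The claim for $\mathcal{J}$ is then automatic from $\mathcal{J}_{M_i}=3\mathcal{I}_{M_i}-2A$, since $\mathcal{J}_{M_1}(a,b)-\mathcal{J}_{M_2}(a,b)=3\bigl(\mathcal{I}_{M_1}(a,b)-\mathcal{I}_{M_2}(a,b)\bigr)>0$.

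For (ii) I would simply invoke linearity of the integral: for $a\neq b$,
\[
\mathcal{I}_{\lambda M_1+(1-\lambda)M_2}(a,b)=\frac{1}{(b-a)^2}\int_a^b\!\!\int_a^b\bigl(\lambda M_1+(1-\lambda)M_2\bigr)\,dx\,dy=\lambda\,\mathcal{I}_{M_1}(a,b)+(1-\lambda)\,\mathcal{I}_{M_2}(a,b),
\]
and for $a=b$ all three quantities equal $a$; the hypothesis that $\lambda M_1+(1-\lambda)M_2$ be a mean function serves only to make the left-hand side meaningful. I would also record the computation $\mathcal{I}_A=A$: by symmetry in $x$ and $y$, $\int_a^b\int_a^b\frac{x+y}{2}\,dx\,dy=\int_a^b\int_a^b x\,dx\,dy=\frac{(a+b)(b-a)^2}{2}$, so division by $(b-a)^2$ returns $A$. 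The two special cases then fall out: since $M_A=2A-M=2\cdot A+(1-2)\cdot M$ is a mean function (noted right after Definition 2.1), (ii) with $\lambda=2$, $M_1=A$, $M_2=M$ gives $\mathcal{I}_{M_A}=2\mathcal{I}_A-\mathcal{I}_M=2A-\mathcal{I}_M=(\mathcal{I}_M)_A$; and since $\mathcal{J}_M=3\mathcal{I}_M-2A=3\cdot\mathcal{I}_M+(1-3)\cdot A$ is a mean function by Theorem 2.1(ii) (and $\mathcal{I}_M\in\mathcal{R}(\mathbb{R}_+^2)$, so that $\mathcal{I}_{\mathcal{I}_M}$ is defined), (ii) with $\lambda=3$, $M_1=\mathcal{I}_M$, $M_2=A$ gives $\mathcal{I}_{\mathcal{J}_M}=3\mathcal{I}_{\mathcal{I}_M}-2\mathcal{I}_A=3\mathcal{I}_{\mathcal{I}_M}-2A=\mathcal{J}_{\mathcal{I}_M}$.

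The only step that is not pure bookkeeping is the strictness in part (i): converting the pointwise inequality ``$M_1>M_2$ off the diagonal'' into a strict inequality between integrals. I expect this to be the main (minor) obstacle; it is resolved cleanly by the a.e.-continuity of Riemann integrable functions as sketched above, and for continuous mean functions it becomes entirely routine.
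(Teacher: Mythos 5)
Your proof is correct and is exactly the ``straightforward calculation'' the paper leaves to the reader: monotonicity and linearity of the double integral, plus $\mathcal{I}_A=A$, yield everything, and your identification of $M_A=2A-M$ and $\mathcal{J}_M=3\mathcal{I}_M-2A$ as the affine combinations feeding into (ii) is the intended route to both ``in particular'' identities. The one point where you go beyond the paper --- using a.e.-continuity of Riemann-integrable functions to turn $M_1>M_2$ off the diagonal into a strict inequality of integrals --- is a genuine detail the paper silently omits, and your treatment of it is right.
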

\begin{proof}
Proof is easily done by straightforward calculations.
\end{proof}
Here are some examples where the above proposition is used:\\
Let $ a\neq b.$\\
\begin{exm} 
\[\mathcal{I}_A(a,b)=\frac{1}{(b-a)^2}\int_a^b\int_a^b\frac{1}{2}(x+y)\ dxdy=\frac{1}{2(b-a)^2}\int_a^b\left[\frac{x^2}{2}+xy\right]_a^b dy\]
\[=\frac{1}{4(b-a)^2}\left[y(b^2-a^2)+y^2(b-a)\right]_a^b=\frac{b+a}{2}=A(a,b).\]\end{exm}
\begin{exm}\[\mathcal{I}_G(a,b)=\frac{1}{(b-a)^2}\int_a^b\int_a^b\sqrt{xy}\ dxdy=\bigg(\frac{1}{b-a}\int_a^b\sqrt{t}\ dt\bigg)^2 =\bigg(\frac{2(b^{\frac{3}{2}}-a^{\frac{3}{2}})}{3(b-a)}\bigg)^2\]
\[=g^2(\sqrt{a},\sqrt{b}).\]\end{exm}
\begin{exm}\[\mathcal{I}_H(a,b)=\frac{2}{(b-a)^2}\int_a^b\int_a^b\frac{xy}{x+y}\ dxdy=\frac{2}{(b-a)^2}\int_a^b\left[xy-y^2\ln(x+y)\right]_a^b dy\]
\[=\frac{2}{3(b-a)^2}\left[y^2(b-a)+y(b^2-a^2)-(y^3+b^3)\ln(y+b)+(y^3+a^3)\ln(y+a)\right]_a^b=\]
\[\frac{4}{3}\bigg(2A(a,b)+\frac{1}{(b-a)^2}\bigg(a^3\ln\frac{A(a,b)}{a}+b^3\ln\frac{A(a,b)}{b}\bigg)\bigg),\ \ \ \ a\neq b.\]\end{exm}
\begin{exm}Let$ a<b$ , then:\\ \[\sqrt{2}(b-a)^2\mathcal{I}_{r}(a,b)=\int_a^b\int_a^b\sqrt{x^2+y^2}\ dxdy\]
\[=\int_{\tan^{-1}\tfrac{a}{b}}^{\tfrac{\pi}{4}}\int_{\tfrac{a}{\sin\theta}}^{\tfrac{b}{\cos\theta}}\rho^2\
d\rho d\theta\ +
\int_{\tfrac{\pi}{4}}^{\tan^{-1}\tfrac{b}{a}}\int_{\tfrac{a}{\cos\theta}}^{\tfrac{b}{\sin\theta}}\rho^2\
d\rho
d\theta.\]\\
Double integrals in the above expression are easily calculated and the final result is:
\[
\mathcal{I}_{r}(a,b)=\frac{1}{3\sqrt{2}(b-a)^2}\bigg(\big(\sqrt{2}+\ln(1+\sqrt{2})\big)(a^3+b^3)-a^3\ln\big(\frac{b+\sqrt{a^2+b^2}}{a}\big)-...\]\[
b^3\ln\big(\frac{a+\sqrt{a^2+b^2}}{b}\big)-2ab\sqrt{a^2+b^2}\bigg),\
\ \ \ a\neq b.\]\end{exm} Similarly, by Proposition 2.1 (ii), we will
have:\\
\begin{exm}Let $a\neq b
$, then\\
\[ \mathcal{I}_C(a,b)=(\mathcal{I}_H)_{A}(a,b)=-\frac{2}{3}\bigg(A(a,b)+\frac{2}{(b-a)^2}\bigg(a^3\ln\frac{A(a,b)}{a}+b^3\ln\frac{A(a,b)}{b}\bigg)\bigg)\]\end{exm}
\begin{exm}Let $a\neq b$, then\\
\[ \mathcal{I}_{g}(a,b)=\frac{4}{3}\mathcal{I}_A(a,b)-\frac{1}{3}\mathcal{I}_H(a,b)=\]
\[\frac{4}{9}\bigg(A(a,b)-\frac{1}{(b-a)^2}\bigg(a^3\ln\frac{A(a,b)}{a}+b^3\ln\frac{A(a,b)}{b}\bigg)\bigg).\]\end{exm}
\begin{exm}\[\hspace{2mm}\mathcal{I}_{Hn}(a,b)=\frac{2}{3}\mathcal{I}_A(a,b)+\frac{1}{3}\mathcal{I}_G(a,b)=\frac{1}{3}\big(2A(a,b)+g^2(\sqrt{a},\sqrt{b})\big)
\]\[=\frac{2(13A^2+13AG+G^2)}{27(A+G)}\]\end{exm}
and finally,

\begin{exm}\[\mathcal{I}_{\frac{A+G}{2}}(a,b)=\frac{1}{2}\mathcal{I}_A(a,b)+\frac{1}{2}\mathcal{I}_G(a,b)=\frac{1}{2}\big(A(a,b)+g^2(\sqrt{a},\sqrt{b})\big)\]\[
=\frac{17A^2+17AG+2G^2}{18(A+G)}.\] \\
Because$
\mathcal{I}_A=\mathcal{J}_A=A,$
\end{exm}
By Proposition 2.1 (i), we can propose the following:\\\
\begin{prop} Let$\ M\ $be a mean function
and$\ a\neq b$, then\\\\(i)$\
M>(<)A\Rightarrow\mathcal{I}_M>(<)A,$\\\\(ii)$\
M>(<)A\Rightarrow\mathcal{J}_M>(<)A,$\\\\(iii)$\ \
\mathcal{I}_C>\mathcal{I}_r>\mathcal{I}_g>A>\mathcal{I}_{Hn}>\mathcal{I}_G>\mathcal{I}_H\
\ $and$\ \
\mathcal{J}_C>\mathcal{J}_r>\mathcal{J}_g>A>\mathcal{J}_{Hn}>\mathcal{J}_G>\mathcal{J}_H$.\end{prop}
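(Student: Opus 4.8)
The plan is to derive all three parts directly from Proposition 2.1(i), together with the two identities $\mathcal{I}_A = A$ and $\mathcal{J}_A = A$ recorded in Examples 2.1 and 2.8. Since both $M$ and $A$ are mean functions, part (i) is immediate: if $M > A$ for $a \neq b$, then applying Proposition 2.1(i) with $M_1 = M$ and $M_2 = A$ gives $\mathcal{I}_M(a,b) > \mathcal{I}_A(a,b) = A(a,b)$, and the case $M < A$ is handled the same way by reversing the inequality. For part (ii), I would write $\mathcal{J}_M = 3\mathcal{I}_M - 2A$; then $M > A$ yields $\mathcal{I}_M > A$ by (i), hence $\mathcal{J}_M > 3A - 2A = A$, and symmetrically for $M < A$. (Equivalently, one may apply the $\mathcal{J}$-part of Proposition 2.1(i) directly and use $\mathcal{J}_A = A$.)

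For part (iii) I would invoke the Taneja chain $C > r > g > A > Hn > G > H$ from the introduction, valid for $a \neq b$. Applying Proposition 2.1(i) to each consecutive pair gives $\mathcal{I}_C > \mathcal{I}_r > \mathcal{I}_g > \mathcal{I}_A > \mathcal{I}_{Hn} > \mathcal{I}_G > \mathcal{I}_H$, and substituting $\mathcal{I}_A = A$ produces exactly the asserted $\mathcal{I}$-chain. The $\mathcal{J}$-chain then follows either by the same argument using the $\mathcal{J}$-part of Proposition 2.1(i) and $\mathcal{J}_A = A$, or more simply by observing that $\mathcal{J}_M = 3\mathcal{I}_M - 2A$ is a strictly increasing affine function of $\mathcal{I}_M$, so the ordering of the $\mathcal{I}$-values transfers verbatim to the $\mathcal{J}$-values.

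The only real point requiring care — and the ``obstacle'', such as it is — is ensuring that Proposition 2.1(i) is being applied with genuinely strict inequalities: one needs the underlying means to be strictly ordered pointwise on the open square $(a,b)\times(a,b)$, so that integrating a strictly larger continuous function over a set of positive measure yields a strictly larger integral. For distinct arguments the Taneja inequalities do hold strictly at every interior point, so no boundary issue arises; this is worth a one-line remark but nothing more. Everything else is a mechanical chaining of Proposition 2.1(i) with the two base identities.
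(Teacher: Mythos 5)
Your argument is exactly the one the paper intends: Proposition 2.2 is stated immediately after the remark ``By Proposition 2.1 (i), we can propose the following,'' and is justified by chaining that monotonicity result with the identities $\mathcal{I}_A=\mathcal{J}_A=A$ and the Taneja ordering $C>r>g>A>Hn>G>H$, which is precisely your route. Your added remark about strictness of the pointwise inequalities is a sensible (and correct) clarification, but the approach is the same as the paper's.
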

 By proposition 2.1 (ii) and proposition 2.2(iii), we infer\[\ \mathcal{I}_r(a,b)>\frac{4}{3}A(a,b)-\frac{1}{3}\mathcal{I}_H(a,b)>A(a,b),
\ \ \ \ a\neq
b.\]Therefore,\[3\mathcal{I}_r(a,b)-2A(a,b)>2A(a,b)-\mathcal{I}_H(a,b)>A(a,b),\
\ \ \ a\neq b.\]In other words,\[\mathcal{J}_r(a,b)>\mathcal{I}_C(a,b)>A(a,b),\ \ \ \ a\neq
b.\]Similarly,\[A(a,b)>\mathcal{I}_G(a,b)>\mathcal{J}_G(a,b),\ \ \
\ a\neq b,\]\\\[\mathcal{I}_C(a,b)+\mathcal{I}_G(a,b)>2A(a,b)>\mathcal{I}_r(a,b)+\mathcal{I}_H(a,b),\
\ \ \ a\neq
b\]and\[\mathcal{J}_C(a,b)+\mathcal{J}_G(a,b)>2A(a,b)>\mathcal{J}_r(a,b)+\mathcal{J}_H(a,b),\
\ \ \ a\neq b.\]

\begin{prop}

 If $a\neq b$,
then\\\\(i) $\ \frac{8}{9}A<\mathcal{I}_G<A$,\\\\(ii)$\
\frac{8(1-\ln2)}{3}A<\mathcal{I}_H<A$,\\\\(iii) $\
A<\mathcal{I}_C<\frac{2(-1+4\ln2)}{3}A$,\\\\(iv) $\
\frac{26}{27}A<\mathcal{I}_{Hn}<A$,\\\\(v) $\
A<\mathcal{I}_g<\frac{4(1+2\ln2)}{9}A$,\\\\(vi) $\
A<\mathcal{I}_r<\frac{2+\sqrt{2}\ln(1+\sqrt{2})}{3}A,$\\\\(vii) $\
\frac{2}{3}A<\mathcal{J}_G<A$,\\\\(viii)$\
2(3-4\ln2)A<\mathcal{J}_H<A$,\\\\(ix) $\
A<\mathcal{J}_C<4(-1+2\ln2)A$,\\\\(x) $\
\frac{8}{9}A<\mathcal{J}_{Hn}<A$,\\\\(xi) $\
A<\mathcal{J}_g<\frac{2(-1+4\ln2)}{3}A$,\\\\(xii) $\
A<\mathcal{J}_r<\sqrt{2}\big(\ln(1+\sqrt{2})\big)A,$\\\\where$\ \
1,\ \ \frac{8}{9},\ \ \frac{8(1-\ln2)}{3},\ \
\frac{2(-1+4\ln2)}{3},\ \ \frac{26}{27},\ \ \frac{4(1+2\ln2)}{9},\
\ \frac{2+\sqrt{2}\ln(1+\sqrt{2})}{3},\ \ \frac{2}{3},\ \
2(3-4\ln2),$\\$4(-1+2\ln2),\ \ $and$\ \ \sqrt{2}\ln(1+\sqrt{2})\ \
$are the best possible bounds we found for the inequalities between the integral means and the mean functions.\end{prop}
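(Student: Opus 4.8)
The plan is to lean on homogeneity. Since $\mathcal I_M$, $\mathcal J_M$ and $A$ are positively homogeneous of degree one, normalising $b=1$ and writing $t=a\in(0,1)$ reduces everything to the one–variable ratio $\varphi_M(t):=\mathcal I_M(t,1)/A(t,1)$, because $\mathcal J_M(t,1)/A(t,1)=3\varphi_M(t)-2$. Two reductions then shrink the problem. First, (vii)--(xii) come from (i)--(vi) by applying the affine map $x\mapsto 3x-2$ to the bounds on $\varphi_M$, and since affine maps send limits to limits, optimality of the constants is preserved. Second, by Proposition~2.1(ii) together with the Examples one has $\mathcal I_{Hn}=\tfrac13(2A+\mathcal I_G)$, $\mathcal I_C=2A-\mathcal I_H$ and $\mathcal I_g=\tfrac43A-\tfrac13\mathcal I_H$, so (iv) follows from (i) and (iii),(v) follow from (ii); moreover every ``$A$–side'' estimate ($\mathcal I_G<A$, $\mathcal I_H<A$, $\mathcal I_{Hn}<A$, $\mathcal I_C>A$, $\mathcal I_g>A$, $\mathcal I_r>A$) is already contained in Proposition~2.3(i), and it is best possible since $\varphi_M(t)\to1$ as $t\to1^-$ (both $\mathcal I_M$ and $A$ being means, so $\varphi_M(t)\to1$ by squeezing). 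Hence the whole proposition reduces to proving, for $0<t<1$,
\[
\varphi_G(t)>\tfrac89,\qquad\varphi_H(t)>\tfrac{8(1-\ln2)}{3},\qquad\varphi_r(t)<\tfrac{2+\sqrt2\,\ln(1+\sqrt2)}{3},
\]
and to checking that each right–hand constant is $\lim_{t\to0^+}\varphi_M(t)$ — i.e. the regime $a/b\to0$.

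For $\varphi_G$ I would substitute the closed form of Example~2.2 and set $s=\sqrt t$: the inequality becomes the polynomial statement $(1+s+s^2)^2>(1+s)^2(1+s^2)$, whose left side minus right side expands to exactly $s^2>0$. This gives $\mathcal I_G>\tfrac89A$ with the constant attained only as $s\to0$, hence also (iv) and, through the affine map, (vii) and (x).

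The substance is $\varphi_H$ and $\varphi_r$. Clearing denominators in Examples~2.3 and~2.4 turns these into $P(t)>0$ and $Q(t)>0$ on $(0,1)$, where
\[
P(t)=(1+t^3)\ln\tfrac{1+t}{2}-t^3\ln t+(\ln 2)\,(1+t)(1-t)^2,
\]
\[
Q(t)=t^3\ln\tfrac{1+\sqrt{1+t^2}}{t}+\ln\bigl(t+\sqrt{1+t^2}\bigr)+2t\sqrt{1+t^2}-\bigl(\sqrt2+\ln(1+\sqrt2)\bigr)t(1+t).
\]
Both $P$ and $Q$ vanish at $t\to0^+$ and at $t\to1^-$, so the \emph{main obstacle} is precisely this ``doubly vanishing'' positivity. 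I would handle it by a four–fold differentiation that runs identically for the two functions: one checks that $P^{(4)}$ and $Q^{(4)}$ have constant sign on $(0,1)$ — they come out to $-6/\bigl(t(1+t)^3\bigr)$ and $-6/\bigl(t(1+t^2)^{3/2}\bigr)$ — so $P'''$ and $Q'''$ are strictly monotone; each blows up to $+\infty$ as $t\to0^+$ and is still positive at $t=1$, hence positive throughout $(0,1)$; therefore $P''$ and $Q''$ are strictly increasing, and being negative at $0^+$ and positive at $t=1$ each has a single zero; hence $P'$ and $Q'$ each decrease then increase, and being positive at $0^+$ and zero at $t=1$ each has exactly one interior zero, at which it passes from $+$ to $-$; consequently $P$ and $Q$ increase and then decrease on $(0,1)$ and, vanishing at both ends, are strictly positive on $(0,1)$.

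Finally, sharpness of the two new constants follows from a Taylor expansion of $\varphi_H$ and $\varphi_r$ at $t\to0^+$ — the only delicate terms being $t^3\ln t\to0$ and $\ln(t+\sqrt{1+t^2})\to0$ — returning $\tfrac{8(1-\ln2)}{3}$ and $\tfrac{2+\sqrt2\ln(1+\sqrt2)}{3}$; pushing these through $\mathcal I_C=2A-\mathcal I_H$, $\mathcal I_g=\tfrac43A-\tfrac13\mathcal I_H$ and $\mathcal J_M=3\mathcal I_M-2A$ yields (iii),(v) and (viii),(ix),(xi),(xii) with optimal constants. The points I expect to need most care are the (elementary but numerical) positivity of $P'''(1)=6\ln2-\tfrac{15}{4}$ and $Q'''(1)=6\ln(1+\sqrt2)-3\sqrt2$, and the limits $t\to1^-$ in $\varphi_H,\varphi_r$, where several individually singular logarithmic terms must be shown to cancel.
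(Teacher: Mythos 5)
Your proposal is correct. I checked the key computations — $P''''(t)=-6/\bigl(t(1+t)^3\bigr)$, $Q''''(t)=-6/\bigl(t(1+t^2)^{3/2}\bigr)$, the vanishing $P(0^+)=P(1)=Q(0^+)=Q(1)=0$, and the positivity of $P'''(1)=6\ln 2-\tfrac{15}{4}$ and $Q'''(1)=6\ln(1+\sqrt2)-3\sqrt2$ — and they all hold; the reductions of (iii)--(v) and (vii)--(xii) to (i), (ii), (vi) are exactly the ones the paper performs (only note that the ``$A$-side'' bounds come from Proposition 2.2(i), not 2.3(i)). Where you genuinely diverge is in the three hard cases $G$, $H$, $r$. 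The paper proves the stronger statement that the ratio $\mathcal{I}_M/A$, as a function of $b/a$, is strictly monotone: it differentiates $f_2$ (resp.\ $f_4$) and shows the numerator of the derivative has constant sign by a four-fold differentiation anchored one-sidedly at $t=1$, where the auxiliary function and all its derivatives vanish, so each monotonicity claim propagates outward from $t=1$ with no interior sign changes to locate. You instead prove only the single nontrivial endpoint inequality, recast as positivity of $P$ or $Q$ on $(0,1)$; since these vanish at both ends, your unwinding is two-sided and must identify one interior zero of $P''$, $Q''$ and then of $P'$, $Q'$ (the ``increase then decrease'' shape), relying on $P'(1)=Q'(1)=0$ to conclude. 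Your route buys noticeably simpler intermediate derivatives (e.g.\ $Q''(t)=6t\ln\frac{1+\sqrt{1+t^2}}{t}-2k$ versus the paper's $f_5''$), a purely algebraic case (i) (the identity $(1+s+s^2)^2-(1+s)^2(1+s^2)=s^2$ replaces a derivative computation), and a clean separation of the inequality from its sharpness; what it gives up is the monotonicity of $\mathcal{I}_M/A$ itself, from which the paper gets both bounds and their optimality simultaneously, whereas you need Proposition 2.2(i) for one side and a separate limit computation at $t\to0^+$ for sharpness. Both arguments are complete and valid.
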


\begin{proof}
(i):\ If we
take $b=at^2,\ \ t>1,\ $\ then the following will be concluded:\\$\ \
f_1(t):=\frac{8(t^2+t+1)^2}{9(t^2+1)(t+1)^2}=\frac{\mathcal{I}_G(a,b)}{A(a,b)}.$\\Taking the derivative, we get:$f_1^{'}(t)=\frac{-16t(t^3-1)}{9(t^2+1)^2(t+1)^3}<0.\
\ $Therefore,$\ \ f_1\ \ $is strictly decreasing. So,$\ \
\lim_{t\to\infty}f_1(t)=\frac{8}{9}<f_1(t)<\lim_{t\to1^{+}}f_1(t)=1,\
\ t>1.$\\(ii):\ If we take $b=at,\ \ t>1,\ $\ then we will have$\
\ f_2(t):=\frac{8}{3}\big(1+\frac{(t^3+1)\ln\frac{t+1}{2}-t^3\ln
t}{(t-1)^2(t+1)}\big)=\frac{\mathcal{I}_H(a,b)}{A(a,b)}.$\\$f_2^{'}(t)=\frac{-8f_3(t)}{3(t^3-t^2-t+1)^2},\
\ $where$\ \ f_3(t):=(t+1)^3\ln\frac{t+1}{2}-(t^3+3t^2)\ln
t+t^3-t^2-t+1.$\\$f_3^{'}(t)=3(t+1)^2\ln\frac{t+1}{2}-3(t^2+2t)\ln
t+3t^2-3t.\ \ f_3^{''}(t)=6(t+1)\ln\frac{t+1}{2}-6(t+1)\ln
t+6t-6.\ \ f_3^{'''}(t)=6\ln\frac{t+1}{2}-6\ln t+6-\frac{6}{t}.\ \
f_3^{''''}(t)=\frac{6}{t^2(t+1)}>0.\ $Therefore, $f_3^{'''}$ is
strictly increasing, so
$f_3^{'''}(t)>\lim_{t\to1^{+}}f_3^{'''}(t)=0,\ \ t>1.\
$Consequently, $f_3^{''}$ is strictly increasing, hence
$f_3^{''}(t)>\lim_{t\to1^{+}}f_3^{''}(t)=0,\ \ t>1.\ $Therefore,
$f_3^{'}$ is strictly increasing, so
$f_3^{'}(t)>\lim_{t\to1^{+}}f_3^{'}(t)=0,\ \ t>1.\ $Thus, $f_3$ is
strictly increasing, hence $f_3(t)>\lim_{t\to1^{+}}f_3(t)=0,\ \
t>1.\ $Consequently,$\ \ f_2^{'}(t)<0,\ \ t>1.\ $Therefore, $f_2$
is strictly decreasing, so
$\frac{8(1-\ln2)}{3}=\lim_{t\to\infty}f_2(t)<f_2(t)<\lim_{t\to1^{+}}f_2(t)=1,\
\ t>1.$\\(iii):\ $\ a\neq
b\Rightarrow\frac{8(1-\ln2)}{3}<\frac{\mathcal{I}_H(a,b)}{A(a,b)}<1\Rightarrow$\\$\frac{2(-1+4\ln2)}{3}=2-\frac{8(1-\ln2)}{3}>2-\frac{\mathcal{I}_H(a,b)}{A(a,b)}=
\frac{\mathcal{I}_C(a,b)}{A(a,b)}>2-1=1.$\\(iv):\ $\ a\neq
b\Rightarrow\frac{8}{9}A(a,b)<\mathcal{I}_G(a,b)<A(a,b)\Rightarrow\frac{26A}{27}=\frac{2A}{3}+\frac{8A}{27}<\frac{2A}{3}+\frac{1}{3}\mathcal{I}_G(a,b)=\mathcal{I}_{Hn}(a,b)
<\frac{2A}{3}+\frac{A}{3}=A.$\\(v):\ $\ a\neq
b\Rightarrow\frac{8(1-\ln2)}{3}<\frac{\mathcal{I}_H(a,b)}{A(a,b)}<1\Rightarrow$\\$\frac{4(1+2\ln2)}{9}=\frac{4}{3}-\frac{8(1-\ln2)}{9}>\frac{4}{3}-\frac{\mathcal{I}_H(a,b)}{3A(a,b)}=
\frac{\mathcal{I}_g(a,b)}{A(a,b)}>\frac{4}{3}-\frac{1}{3}=1.$\\(vi):\
$$f_4(t):=\frac{\sqrt{2}}{3}\bigg(\frac{k(t^3+1)-t^3\ln\frac{1+\sqrt{1+t^2}}{t}-\ln(t+\sqrt{1+t^2})-2t\sqrt{1+t^2}}{(t+1)(t-1)^2}\bigg)=\frac{\mathcal{I}_r(a,b)}{A(a,b)},$$
where$\ a=bt,\ \ 0<t<1\ $and$\ k:=\sqrt{2}+\ln(1+\sqrt{2}).\ $The we
have:\\$\ f_4^{'}(t)=\frac{\sqrt{2}f_5(t)}{3(t+1)^2(t-1)^3}$,where\[
f_5(t):=(3t^2+2t+3)\sqrt{1+t^2}+t^2(t+3)\ln\frac{1+\sqrt{1+t^2}}{t}+...\]\[(3t+1)\ln(t+\sqrt{1+t^2})-k(t+1)^3.\]\\
Therefore,
$$f_5^{'}(t)=(3t^2+6t)\ln\frac{1+\sqrt{1+t^2}}{t}+3\ln(t+\sqrt{1+t^2})+(9t+3)\sqrt{1+t^2}-3k(t+1)^2,$$
$$f_5^{''}(t)=(6t+6)\ln\frac{1+\sqrt{1+t^2}}{t}+\frac{(18t^2+6)}{\sqrt{1+t^2}}-6k(t+1),$$
$$f_5^{'''}(t)=6\bigg(\frac{3t^4-t^3+4t^2-t-1}{t(t^2+1)^{\frac{3}{2}}}+\ln\frac{1+\sqrt{1+t^2}}{t}-k\bigg)$$and
\[f_5^{''''}(t)=\frac{6(1-t+8t^2-t^3+t^4)}{t^2(t^2+1)^{\frac{5}{2}}}.\]Since$\ t\in(0,1),\ $so$\ 1-t+8t^2-t^3+t^4=(1-t)+(8-t+t^2)t^2>0.\ $Hence,$\
f_5^{''''}>0\ $on$\ (0,1).$\\Consequently,$\ f_5^{'''}\ $will be
strictly increasing. Therefore,$\
f_5^{'''}(t)<\lim_{t\to1^{-}}f_5^{'''}(t)=6(\sqrt{2}+\ln(1+\sqrt{2})-k)=0,\
$for$\ 0<t<1.\ $Thus,$\ f_5^{''}\ $is strictly decreasing.
Hence,$\
f_5^{''}(t)>\lim_{t\to1^{-}}f_5^{''}(t)=12\ln(1+\sqrt{2})+12\sqrt{2}-12k=0,\
$for$\ 0<t<1.\ $So,$\ f_5^{'}\ $will be strictly increasing.
Therefore,$\
f_5^{'}(t)<\lim_{t\to1^{-}}f_5^{'}(t)=12\ln(1+\sqrt{2})+12\sqrt{2}-12k=0,\
$for$\ 0<t<1.\ $Hence,$\ f_5\ $is strictly decreasing. So,$\
f_5(t)>\lim_{t\to1^{-}}f_5(t)=8\ln(1+\sqrt{2})+8\sqrt{2}-8k=0,\
$on$\ (0,1)\ $and consequently$\ f_4^{'}<0\ $on$\ (0,1).\ $Thus,$\
f_4\ $will be strictly decreasing 0n$\ (0,1).\ $Therefore,$\
1=\lim_{t\to1^{-}}<f_4(t)<f_4(t)=\frac{\mathcal{I}_r(a,b)}{A(a,b)}<\lim_{t\to0^{+}}f_4(t)=\frac{\sqrt{2}k}{3}=\frac{2+\sqrt{2}\ln(1+\sqrt{2})}{3},\
$for$\ 0<t<1.$
\\By
(i), (ii), (iii), (iv), (v) and (vi), (vii), (viii), (ix), (x),
(xi) and (xii) are straightforward.
\end{proof}
\begin{thm}
 Let $\ M\in\mathcal{R}(\mathbb{R}_{+}^2)\ $ be a mean function
and $\ \varphi:[\alpha_0,\beta_0]\to[0,\infty) \ $be an integrable
function; that is, $\ \varphi\in\mathcal{R}([\alpha_0,\beta_0])$,
where$\ \alpha_0,\beta_0\in\mathbb{R}\ $and$\ \alpha_0<\beta_0.\
$Besides,$\ \psi:\mathbb{R}_{+}\to\mathbb{R}_{+}\ $is a Lipschitz
function with the constant of 1; that
is\[|\psi(x)-\psi(y)|\leq|x-y|,\ \ \ \
(x,y)\in\mathbb{R}_{+}^2.\]Then $\mathcal{S}_{M,\varphi,\psi}:=\mathcal{S}_{M,\varphi,\psi}(a,b)$, defined in the following way:\\

    \[\mathcal{S}_{M,\varphi,\psi}:=-\bar{\varphi}+A(a,b)-A(\psi(a),\psi(b))+\frac{1}{\beta_0-\alpha_0}\int_{\alpha_0}^{\beta_0}M\big(\varphi(t)+\psi(a),\varphi(t)+\psi(b)\big)\
    dt\]
is a mean of $a$ and $b$, where\\
\[\bar{\varphi}:=\frac{1}{\beta_0-\alpha_0}\int_{\alpha_0}^{\beta_0}\varphi(t)\
dt.\]\end{thm}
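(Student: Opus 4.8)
\emph{Plan.} The idea is to verify the two defining conditions of a mean directly, reducing everything to a pointwise sandwich on $M$ combined with the Lipschitz bound on $\psi$. Assume without loss of generality $b\ge a$; then the goal is to show $a\le\mathcal{S}_{M,\varphi,\psi}\le b$ together with symmetry in $a,b$. Symmetry I would dispose of first and for free: $A$ is symmetric in its arguments, and $M(\varphi(t)+\psi(a),\varphi(t)+\psi(b))=M(\varphi(t)+\psi(b),\varphi(t)+\psi(a))$ for every $t$ because $M$ is symmetric, while $\bar\varphi$ does not involve $a$ or $b$; hence the whole expression is unchanged under $a\leftrightarrow b$. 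I would also remark in passing that $\varphi(t)+\psi(a)$ and $\varphi(t)+\psi(b)$ are positive (as $\varphi\ge0$ and $\psi$ takes positive values), so $M$ is legitimately being evaluated, and that $t\mapsto M(\varphi(t)+\psi(a),\varphi(t)+\psi(b))$ is integrable, so the defining integral makes sense.

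For the bounds, the first step is to use that $M$ is a mean function together with the translation invariance of $\min$ and $\max$: for every $t\in[\alpha_0,\beta_0]$,
\[\varphi(t)+\min\{\psi(a),\psi(b)\}\ \le\ M\big(\varphi(t)+\psi(a),\varphi(t)+\psi(b)\big)\ \le\ \varphi(t)+\max\{\psi(a),\psi(b)\}.\]
Next I would average this over $[\alpha_0,\beta_0]$; since integration is monotone, this yields the same chain with $\varphi(t)$ replaced by $\bar\varphi$ and the middle term replaced by $\frac{1}{\beta_0-\alpha_0}\int_{\alpha_0}^{\beta_0}M(\cdots)\,dt$. Substituting these into the definition of $\mathcal{S}_{M,\varphi,\psi}$, the $\bar\varphi$ terms cancel against the explicit $-\bar\varphi$, leaving
\[A(a,b)-A(\psi(a),\psi(b))+\min\{\psi(a),\psi(b)\}\ \le\ \mathcal{S}_{M,\varphi,\psi}\ \le\ A(a,b)-A(\psi(a),\psi(b))+\max\{\psi(a),\psi(b)\}.\]

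The last step is to simplify the outer expressions. Since $A(\psi(a),\psi(b))$ is the midpoint of $\psi(a)$ and $\psi(b)$, we have $\max\{\psi(a),\psi(b)\}-A(\psi(a),\psi(b))=A(\psi(a),\psi(b))-\min\{\psi(a),\psi(b)\}=\tfrac12|\psi(a)-\psi(b)|$, so the chain above becomes $A(a,b)-\tfrac12|\psi(a)-\psi(b)|\le\mathcal{S}_{M,\varphi,\psi}\le A(a,b)+\tfrac12|\psi(a)-\psi(b)|$. Finally the Lipschitz hypothesis gives $|\psi(a)-\psi(b)|\le|a-b|=b-a$, hence $\mathcal{S}_{M,\varphi,\psi}\ge\frac{a+b}{2}-\frac{b-a}{2}=a$ and $\mathcal{S}_{M,\varphi,\psi}\le\frac{a+b}{2}+\frac{b-a}{2}=b$, which is exactly condition (a); the case $a=b$ is covered by the same computation, which collapses to $\mathcal{S}_{M,\varphi,\psi}=a$.

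I do not expect a genuine obstacle: the proof is essentially a one-line sandwich followed by the Lipschitz estimate. The only things worth care are bookkeeping — checking that the correction terms $-\bar\varphi$ and $-A(\psi(a),\psi(b))$ in the definition are precisely what is needed to cancel the shift by $\varphi(t)$ and to recenter $\psi(a),\psi(b)$ at their mean — and, at the level of rigor, the measurability/integrability of $t\mapsto M(\varphi(t)+\psi(a),\varphi(t)+\psi(b))$, which is where the hypotheses $M\in\mathcal{R}(\mathbb{R}_{+}^2)$ and $\varphi\in\mathcal{R}([\alpha_0,\beta_0])$ come in. The conceptual heart is the single inequality $|\psi(a)-\psi(b)|\le|a-b|$ feeding into the mean-function sandwich.
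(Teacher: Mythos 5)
Your proposal is correct and follows essentially the same route as the paper: the paper's proof is exactly the chain $\min\{a,b\}=A(a,b)-\tfrac{|a-b|}{2}\le A(a,b)-\tfrac{|\psi(a)-\psi(b)|}{2}=A(a,b)-A(\psi(a),\psi(b))+\min\{\psi(a),\psi(b)\}\le\mathcal{S}_{M,\varphi,\psi}\le\cdots\le\max\{a,b\}$, i.e.\ the same sandwich via the mean property of $M$, cancellation of $\bar\varphi$, and the Lipschitz estimate, followed by the observation that symmetry is immediate. Your write-up merely makes explicit the pointwise step $\varphi(t)+\min\{\psi(a),\psi(b)\}\le M(\varphi(t)+\psi(a),\varphi(t)+\psi(b))\le\varphi(t)+\max\{\psi(a),\psi(b)\}$ and the integrability remark, which the paper leaves implicit.
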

\begin{proof}
\[\min\{a,b\}=A(a,b)-\frac{|a-b|}{2}\leq A(a,b)-\frac{|\psi(a)-\psi(b)|}{2}=\]\[A(a,b)-A(\psi(a),\psi(b))+\min\{\psi(a),\psi(b)\}\leq
\mathcal{S}_{M,\varphi,\psi}\leq\]\[
A(a,b)-A(\psi(a),\psi(b))+\max\{\psi(a),\psi(b)\}\]\[=A(a,b)+\frac{|\psi(a)-\psi(b)|}{2}\leq
A(a,b)+\frac{|a-b|}{2}=\max\{a,b\}.\]Also, it is obvious that$\
\mathcal{S}_{M,\varphi,\psi}\ $is symmetric.
\end{proof}

\begin{remark}
Since$\
\varphi\in\mathcal{R}([\alpha_0,\beta_0])\Leftrightarrow\varphi\circ\eta\in\mathcal{R}([0,1]),\
$\\where$\ \eta(t):=(\beta_0-\alpha_0)t+\alpha_0.\ $So, without
loss of generality, we can assume$\ \alpha_0=0\ $and\\$\
\beta_0=1.$
\end{remark}
\begin{remark} If$\ \ \psi(a)=\psi(b),\ \ $then$\ \
\mathcal{S}_{M,\varphi,\psi}(a,b)=A(a,b).$\end{remark}
\begin{remark}If$\
\varphi=c\geq0\ $($\ c$ is a constant),\ \
then\[\mathcal{S}_{M,c,\psi}=\mathcal{S}_{M,c,\psi}(a,b)=-c+M(c+\psi(a),c+\psi(b))-A(\psi(a),\psi(b))+A(a,b).\]\end{remark}
\begin{remark}\[(\mathcal{S}_{M,c,\psi})_{A}=(\mathcal{S}_{M,c,\psi})_{A}(a,b)=c-M(c+\psi(a),c+\psi(b))+A(\psi(a),\psi(b))+A(a,b).\]\end{remark}
\begin{prop}Let $M,M_1$ and $M_2$ be mean functions
and$\ \lambda\in\mathbb{R}.\ $ Then\\\\(i) $\
\mathcal{S}_{A,\varphi,\psi}=A,$\\\\(ii)
$M_1>M_2\Rightarrow\mathcal{S}_{M_{1},\varphi,\psi}>\mathcal{S}_{M_{2},\varphi,\psi},\
\ a\neq b,$\\\\In particular, $\
M>(<)A\Rightarrow\mathcal{S}_{M,\varphi,\psi}>(<)A,$\\\\(iii)$\
\mathcal{S}_{\lambda M_1+(1-\lambda)
M_2,\varphi,\psi}=\lambda\mathcal{S}_{M_1,\varphi,\psi}+(1-\lambda)\mathcal{S}_{M_2,\varphi,\psi},\
$if$\ \lambda M_{1}+(1-\lambda) M_{2}\ $is a mean
function.\\
In particular,$
\mathcal{S}_{M_{A},\varphi,\psi}=(\mathcal{S}_{M,\varphi,\psi})_A.$\end{prop}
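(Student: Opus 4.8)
The plan is to isolate the only $M$-dependent part of $\mathcal{S}_{M,\varphi,\psi}$, namely the average $\frac{1}{\beta_0-\alpha_0}\int_{\alpha_0}^{\beta_0}M(\varphi(t)+\psi(a),\varphi(t)+\psi(b))\,dt$, and to observe that the remaining summand $c:=-\bar\varphi+A(a,b)-A(\psi(a),\psi(b))$ is fixed once $a,b,\varphi,\psi$ are fixed; each item is then a short computation. For (i) I would use that the arithmetic mean is translation-equivariant, $A(\varphi(t)+\psi(a),\varphi(t)+\psi(b))=\varphi(t)+A(\psi(a),\psi(b))$ for every $t$; averaging over $[\alpha_0,\beta_0]$ turns the integral term into $\bar\varphi+A(\psi(a),\psi(b))$, which cancels the matching terms inside $c$, leaving $\mathcal{S}_{A,\varphi,\psi}=A(a,b)$.

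For (iii), substituting $M=\lambda M_1+(1-\lambda)M_2$ and using linearity of the integral splits the integral term as $\lambda$ times that of $\mathcal{S}_{M_1,\varphi,\psi}$ plus $(1-\lambda)$ times that of $\mathcal{S}_{M_2,\varphi,\psi}$; since $\lambda+(1-\lambda)=1$, the constant part is recovered as $\lambda c+(1-\lambda)c=c$, and adding gives the claimed identity (whenever the left side is defined, i.e.\ whenever $\lambda M_1+(1-\lambda)M_2$ is a mean). The one thing to watch is that $c$ is reproduced by an \emph{affine}, not linear, combination, which is exactly why the coefficients summing to $1$ matters. For the special case I would write $M_A=2A-M$ as the affine combination with $M_1=A$, $M_2=M$, $\lambda=2$, which is admissible since $M_A$ is a mean; then (iii) together with (i) gives $\mathcal{S}_{M_A,\varphi,\psi}=2A-\mathcal{S}_{M,\varphi,\psi}$, and this is $(\mathcal{S}_{M,\varphi,\psi})_A$ by the definition of $A$-complementarity.

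For (ii), the difference $\mathcal{S}_{M_1,\varphi,\psi}-\mathcal{S}_{M_2,\varphi,\psi}$ equals $\frac{1}{\beta_0-\alpha_0}\int_{\alpha_0}^{\beta_0}(M_1-M_2)(\varphi(t)+\psi(a),\varphi(t)+\psi(b))\,dt$, the constant parts having cancelled. The integrand is $\geq 0$ for every $t$ (since $M_1\geq M_2$ with equality only on the diagonal) and is strictly positive for every $t$ whenever $\psi(a)\neq\psi(b)$; being Riemann integrable as in Theorem~2.2, it then has strictly positive integral, so $\mathcal{S}_{M_1,\varphi,\psi}>\mathcal{S}_{M_2,\varphi,\psi}$. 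If instead $\psi(a)=\psi(b)$, both sides equal $A(a,b)$ by the earlier remark, so the strict inequality in (ii) is to be read under the proviso $\psi(a)\neq\psi(b)$ (automatic when $\psi$ is injective). The ``in particular'' follows by applying (ii) to the pair $(M,A)$ and invoking (i): $M>A$ forces $\mathcal{S}_{M,\varphi,\psi}>\mathcal{S}_{A,\varphi,\psi}=A$, and $M<A$ forces $A=\mathcal{S}_{A,\varphi,\psi}>\mathcal{S}_{M,\varphi,\psi}$.

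The arithmetic is entirely routine; the one place I expect to spend any real care is the strictness in (ii), where I must use that a pointwise-positive Riemann-integrable function has positive integral and must separately dispose of the degenerate case $\psi(a)=\psi(b)$.
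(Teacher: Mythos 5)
Your proof is correct and is exactly the ``direct calculation'' that the paper leaves unwritten: translation-equivariance of $A$ for (i), linearity of the integral plus the affine coefficients summing to $1$ for (iii) and the $M_A=2A-M$ specialization, and positivity of the integral of a nonnegative, somewhere-positive Riemann-integrable integrand for (ii). Your one substantive addition is the observation that the strict inequality in (ii) genuinely fails when $\psi(a)=\psi(b)$ (both sides then equal $A(a,b)$, consistent with the paper's own Remark), so the hypothesis $a\neq b$ in the statement should really be $\psi(a)\neq\psi(b)$ --- a small but real correction that the paper's ``straightforward'' proof glosses over.
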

\begin{proof}
It is straightforward by direct calculations.\end{proof}
\begin{exm}
Let$\ \ M=G,\ \ \varphi(t)=c\geq0\ $($\
c$ is a constant) and $\check{\psi}(t)=\frac{t-\sin t}{2},\
$then\[\mathcal{N}_c:=\mathcal{N}_c(a,b):=\mathcal{S}_{G,c,\check{\psi}}(a,b)=A(a,b)-\frac{1}{4}\bigg(\sqrt{2c+a-\sin
a}-\sqrt{2c+b-\sin
b}\bigg)^2.\]In particular,\[\mathcal{N}_0=\mathcal{N}_0(a,b)=A(a,b)-\frac{1}{4}\bigg(\sqrt{a-\sin
a}-\sqrt{b-\sin b}\bigg)^2.\]We can
see\[(\mathcal{N}_c)_{A}=(\mathcal{N}_c)_{A}(a,b)=A(a,b)+\frac{1}{4}\bigg(\sqrt{2c+a-\sin
a}-\sqrt{2c+b-\sin
b}\bigg)^2.\]In particular,\[(\mathcal{N}_0)_{A}=(\mathcal{N}_0)_{A}(a,b)=A(a,b)+\frac{1}{4}\bigg(\sqrt{a-\sin
a}-\sqrt{b-\sin b}\bigg)^2.\]\end{exm}
\begin{exm}
Let$\ \ M=G,\ \
\varphi(t)=c\geq0\ $($\ c$ is a constant) and
$\tilde{\psi}(t)=\ln(t^2+1),\
$then\[L_c:=L_c(a,b):=\mathcal{S}_{G,c,\tilde{\psi}}(a,b)=-c+A(a,b)-\ln(\sqrt{(a^2+1)(b^2+1)})+...\]\[\sqrt{(c+\ln(a^2+1))(c+\ln(b^2+1))}
.\]\\In particular,\[L_0=L_0(a,b)=A(a,b)-\ln\big(\sqrt{(a^2+1)(b^2+1)}\big)+\sqrt{(\ln(a^2+1))(\ln(b^2+1))}.\]
Also we can see\[(L_c)_{A}=(L_c)_{A}(a,b)=c+A(a,b)+\ln(\sqrt{(a^2+1)(b^2+1)})-...\]\[\sqrt{(c+\ln(a^2+1))(c+\ln(b^2+1))}.\]\\
In particular,\[
(L_0)_{A}=(L_0)_{A}(a,b)=A(a,b)+\ln(\sqrt{(a^2+1)(b^2+1)})-\sqrt{(\ln(a^2+1))(\ln(b^2+1))}.\]\end{exm}
\begin{exm} Let$\ \ M=H,\ \ \varphi(t)=Id(t)=t,\
$then\[J_{\psi}:=J_{\psi}(a,b):=\mathcal{S}_{H,Id,\psi}(a,b)=A(a,b)-\bigg(\frac{\psi(a)-\psi(b)}{2}\bigg)^2
\ln\bigg(1+\frac{1}{A(\psi(a),\psi(b))}\bigg).\]In particular,\begin{equation}
   J_{Id}=J_{Id}(a,b)=A(a,b)-\bigg(\frac{a-b}{2}\bigg)^2
\ln\bigg(1+\frac{1}{A(a,b)}\bigg). 
\end{equation}
We can
see\[(J_{\psi})_{A}=(J_{\psi})_{A}(a,b)=A(a,b)+\bigg(\frac{\psi(a)-\psi(b)}{2}\bigg)^2
\ln\bigg(1+\frac{1}{A(\psi(a),\psi(b))}\bigg).\]In particular,\[(J_{Id})_{A}=(J_{Id})_{A}(a,b)=A(a,b)+\bigg(\frac{a-b}{2}\bigg)^2
\ln\bigg(1+\frac{1}{A(a,b)}\bigg).\]
\end{exm}
\begin{exm} Let$ M=G\ $
and $\ \psi(t)=Id(t)=t,\
$then\begin{equation}
    I_{\varphi}:=I_{\varphi}(a,b):=\mathcal{S}_{G,\varphi,Id}(a,b)=-\bar{\varphi}+\int_0^1\sqrt{(\varphi(t)+a)(\varphi(t)+b)}dt.
\end{equation}In particular,\begin{equation}
    I_{Id}=I_{Id}(a,b)=
-\frac{1}{2}+F_1(a+1,b+1)-F_1(a,b)-F_2(a+1,b+1)+F_2(a,b),
\end{equation} 
where\begin{equation}
    F_1(x,y):={\frac{1}{4}}(x+y)\sqrt{xy},\hskip2cm
F_2(x,y):={\frac{1}{4}}(x-y)^2\ln\frac{\sqrt{x}+\sqrt{y}}{\sqrt{2}}.
\end{equation}

We can
see\[(I_{\varphi})_{A}=(I_{\varphi})_{A}(a,b)=a+b+\bar{\varphi}-\int_0^1\sqrt{(\varphi(t)+a)(\varphi(t)+b)}\
dt.\]In particular,\[(I_{Id})_{A}=(I_{Id})_{A}(a,b)=
a+b+\frac{1}{2}-F_1(a+1,b+1)+F_1(a,b)+F_2(a+1,b+1)-F_2(a,b).\]If\
$a\neq b,\ $by proposition4 (ii), we will
have\[\mathcal{N}_c<A,\hskip2cm L_c<A,\hskip2cm
J_{\psi}<A,\hskip2cm I_{\varphi}<A.\]
\end{exm}
\begin{prop}
If$\ \ a\neq b,\ \ $then$\ \ I_{\varphi}>G,\ \ $for every$\
\varphi,\ $whose support is positive measure; equivalently, there
exists $\ S\subseteq[0,1]\ \ $with$\ \ |S|>0,\ \ $such that $\
\varphi(t)>0\ \ $for every $\ t\in S.$\end{prop}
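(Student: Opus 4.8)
The plan is to reduce the claim to a pointwise estimate under the integral sign. With $G=\sqrt{ab}$, combining the formula
\[
I_{\varphi}(a,b)=-\bar{\varphi}+\int_0^1\sqrt{(\varphi(t)+a)(\varphi(t)+b)}\,dt
\]
with $\bar{\varphi}=\int_0^1\varphi(t)\,dt$ gives
\[
I_{\varphi}(a,b)-G=\int_0^1\Big(\sqrt{(\varphi(t)+a)(\varphi(t)+b)}-\varphi(t)-\sqrt{ab}\,\Big)\,dt .
\]
So it is enough to study the auxiliary function $h(s):=\sqrt{(s+a)(s+b)}-s$ on $[0,\infty)$ and to show that $h(\varphi(t))-\sqrt{ab}$ is $\ge0$ throughout $[0,1]$ and is bounded below by a strictly positive constant on a subset of positive measure.

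First I would prove the pointwise inequality $h(s)\ge\sqrt{ab}$ for all $s\ge0$, with equality only at $s=0$. Both sides of $\sqrt{(s+a)(s+b)}\ge s+\sqrt{ab}$ are nonnegative, so squaring shows it is equivalent to $(s+a)(s+b)\ge(s+\sqrt{ab})^2$, that is, to $(a+b)s\ge 2\sqrt{ab}\,s$; for $s>0$ this is exactly AM--GM for $a,b$, and it is strict since $a\ne b$. Equivalently, rationalizing the numerator gives the transparent identity
\[
h(s)-\sqrt{ab}=\frac{(\sqrt{a}-\sqrt{b})^2\,s}{\sqrt{(s+a)(s+b)}+s+\sqrt{ab}},
\]
which exhibits the sign immediately and, since its numerator and denominator are each monotone in $s$, yields a uniform positive lower bound on any interval $s\in[1/n,M]$.

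To conclude, put $S:=\{t\in[0,1]:\varphi(t)>0\}$; by hypothesis $|S|>0$, and since $S$ is the increasing union of the sets $S_n:=\{t:\varphi(t)>1/n\}$, we have $|S_n|>0$ for some $n$. As $\varphi$ is Riemann integrable it is bounded, say $\varphi(t)\le M$ for all $t$, so for $t\in S_n$ the identity above gives $h(\varphi(t))-\sqrt{ab}\ge\varepsilon_n>0$, where $\varepsilon_n:=\frac{(\sqrt{a}-\sqrt{b})^2}{n\big(\sqrt{(M+a)(M+b)}+M+\sqrt{ab}\big)}$, while $h(\varphi(t))-\sqrt{ab}\ge0$ for every $t$ because $\varphi(t)\ge0$. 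Therefore
\[
I_{\varphi}(a,b)-G=\int_0^1\big(h(\varphi(t))-\sqrt{ab}\big)\,dt\ge\int_{S_n}\big(h(\varphi(t))-\sqrt{ab}\big)\,dt\ge\varepsilon_n\,|S_n|>0 .
\]

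The step I expect to be the main obstacle is this last one. A nonnegative integrable function can have vanishing integral while being positive at some points, so the hypothesis $|S|>0$ is genuinely needed and cannot be relaxed to mere non-triviality of $\varphi$; one must extract a fixed positive-measure piece $S_n$ on which the integrand is bounded below by a positive constant, rather than appeal to pointwise positivity alone. This is also where $a\ne b$ is used essentially: when $a=b$ the function $h$ is the constant $a$, so $I_{\varphi}=G$ identically and the strict inequality fails.
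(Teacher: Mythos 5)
Your proof is correct and follows essentially the same route as the paper: both reduce $I_{\varphi}-G$ to $\int_0^1\bigl(\sqrt{(\varphi(t)+a)(\varphi(t)+b)}-\varphi(t)-\sqrt{ab}\bigr)\,dt$ and observe that the integrand has the sign of $(\sqrt{a}-\sqrt{b})^2\varphi(t)$, hence is nonnegative everywhere and positive on $S$. The only difference is that you supply the justification for the final step --- extracting $S_n=\{t:\varphi(t)>1/n\}$ of positive measure and a uniform lower bound $\varepsilon_n$ via boundedness of $\varphi$ --- which the paper asserts without proof.
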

\begin{proof} Let$\
a\neq
b.$We know that \[I_{\varphi}>G\]is equivalent to \[\int_0^1\bigg(\sqrt{(\varphi(t)+a)(\varphi(t)+b)}-\varphi(t)-G(a,b)\bigg)\
dt>0.\] Let us start our argument by working with the integrand:
\[\sqrt{(\varphi(t)+a)(\varphi(t)+b)}-\varphi(t)-G(a,b)>(\geq)0\] This results in: \[(\sqrt{a}-\sqrt{b})^2\varphi(t)>(\geq)0.\]Thus,
$
    \sqrt{(\varphi(t)+a)(\varphi(t)+b)}-\varphi(t)-G(a,b)\geq0,$
$t\in[0,1]$\label{3}
and
\\
$ \sqrt{(\varphi(t)+a)(\varphi(t)+b)}-\varphi(t)-G(a,b)>0$,$t\in S$.\label{4} By (\ref{3}) and (\ref{4}), we will
have\[\int_0^1\bigg(\sqrt{(\varphi(t)+a)(\varphi(t)+b)}-\varphi(t)-G(a,b)\bigg)\
dt>0.\]Hence, $ I_{\varphi}>G.$\end{proof}
\begin{prop}
If$\ a\neq b,\ $then\\\\(i)\ $J_{Id}>H,$\\\\(ii)\ neither $\
J_{Id}>G\ $nor$\ J_{Id}<G$,\\\\(iii)\ neither $\ L_{0}<G\ $nor$\
L_{0}>H$,\\\\(iv)\ neither $\ \mathcal{N}_{0}<G\ $nor$\
\mathcal{N}_{0}>H.$\end{prop}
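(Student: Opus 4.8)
The plan is to reduce every assertion to the closed forms already established for $J_{Id}$, $L_0$, $\mathcal{N}_0$, together with the identity $A-H=\dfrac{(a-b)^2}{4A}$ (with $A=A(a,b)$). Part (i) is immediate from the first of these: since $J_{Id}=A-\big(\tfrac{a-b}{2}\big)^2\ln\!\big(1+\tfrac1A\big)$,
\[
J_{Id}-H=(A-H)-\Big(\tfrac{a-b}{2}\Big)^2\ln\!\Big(1+\tfrac1A\Big)=\frac{(a-b)^2}{4}\Big(\tfrac1A-\ln\!\big(1+\tfrac1A\big)\Big)>0,
\]
because $\ln(1+t)<t$ for $t=1/A>0$ and $a\ne b$. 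I expect this step to be routine.

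Parts (ii)--(iv) are non-comparability statements, so the work is to exhibit, for each claimed failure, a witnessing pair. I would obtain these from a near-diagonal expansion. Each mean has the form $\mathcal{S}=A-\lambda\big(\sqrt{\psi(a)}-\sqrt{\psi(b)}\big)^2$, with $(\psi,\lambda)=\big(\ln(x^2+1),\tfrac12\big)$ for $L_0$ and $(\psi,\lambda)=(x-\sin x,\tfrac14)$ for $\mathcal{N}_0$. Setting $a=m-\varepsilon$, $b=m+\varepsilon$ and expanding as $\varepsilon\to0$,
\[
\mathcal{S}-G=\varepsilon^2\Big(\tfrac1{2m}-\tfrac{\lambda\,\psi'(m)^2}{\psi(m)}\Big)+o(\varepsilon^2),\qquad \mathcal{S}-H=\varepsilon^2\Big(\tfrac1{m}-\tfrac{\lambda\,\psi'(m)^2}{\psi(m)}\Big)+o(\varepsilon^2),
\]
while for $J_{Id}$ one gets directly $J_{Id}-G=\varepsilon^2\big(\tfrac1{2m}-\ln(1+\tfrac1m)\big)+o(\varepsilon^2)$. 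So near the diagonal the sign of each difference is governed by a single explicit function of $m$, and it remains only to choose $m$ to make each sign occur.

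For (ii): $t\mapsto\tfrac t2-\ln(1+t)$ vanishes at $0$, decreases, then increases, so $\tfrac1{2m}-\ln(1+\tfrac1m)>0$ for small $m$ and $<0$ for large $m$; hence $J_{Id}>G$ near a small diagonal point and $J_{Id}<G$ near a large one (concretely $J_{Id}(0.1,0.2)>G(0.1,0.2)$ while $J_{Id}(1,2)<G(1,2)$), so neither inequality is universal. For (iv), with $\psi(x)=x-\sin x$, the sign of $\mathcal{N}_0-G$ near $m$ is that of $2(m-\sin m)-m(1-\cos m)^2$ and of $\mathcal{N}_0-H$ that of $4(m-\sin m)-m(1-\cos m)^2$; both are positive at $m=1$ (so $\mathcal{N}_0>G>H$ there) and both negative at $m=3$ (so $\mathcal{N}_0<H<G$ there), so both comparisons in (iv) fail to hold for all $a\ne b$. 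For (iii), with $\psi(x)=\ln(x^2+1)$, the sign of $L_0-G$ near $m$ is that of $(m^2+1)^2\ln(m^2+1)-4m^3$, which is positive for $m$ near $0$ (leading term $m^2$) and negative at $m=1$ (it equals $4\ln2-4$), so neither $L_0>G$ nor $L_0<G$ holds for all $a\ne b$.

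The step I expect to be the real obstacle is the remaining clause of (iii), that $L_0>H$ should fail. Near the diagonal $L_0-H$ has the sign of $(m^2+1)^2\ln(m^2+1)-2m^3$, and in fact $(m^2+1)^2\ln(m^2+1)>2m^3$ for every $m>0$: $\ln(1+u)>\tfrac{u}{1+u}$ with $u=m^2$ gives $(m^2+1)^2\ln(m^2+1)>m^2(m^2+1)=m^4+m^2\ge 2m^3$, the last step being $m^2(m-1)^2\ge0$. So any witness with $L_0\le H$ would have to lie well off the diagonal; moreover $x\mapsto\ln(x^2+1)$ is $1$-Lipschitz (its derivative never exceeds $1$), so $\big(\sqrt{\ln(a^2+1)}-\sqrt{\ln(b^2+1)}\big)^2<(a-b)^2$, which makes $L_0>H$ look plausible globally — unlike $\mathcal{N}_0$, whose inner map $x\mapsto x-\sin x$ has derivative up to $2$ near $x=\pi$ and can stretch square-root distances enough to push $\mathcal{N}_0$ below $H$. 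Settling this clause — either by producing such a pair, or by establishing $L_0>H$ throughout (parallel to (i)) so that the non-comparability asserted in (iii) is with $G$ alone — is where the genuine difficulty lies.
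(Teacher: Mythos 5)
Your parts (i), (ii), (iv) and the first clause of (iii) are correct, and your route genuinely differs from the paper's: where the paper simply exhibits numerical pairs ($J_{Id}(0.5,1)<G(0.5,1)$, $J_{Id}(0.5,0.2)>G(0.5,0.2)$, $L_0(0.1,0.2)>G(0.1,0.2)$, $\mathcal{N}_0(0.5,0.2)>G(0.5,0.2)$, plus two near-diagonal pairs for the $H$-clauses), you read off the sign of each difference near the diagonal from the coefficient $\tfrac{1}{2m}-\tfrac{\lambda\,\psi'(m)^2}{\psi(m)}$ (resp. $\tfrac1m-\tfrac{\lambda\,\psi'(m)^2}{\psi(m)}$) and then choose $m$. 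This is more systematic and it certifies the witnesses instead of asserting decimals; e.g. your witness $m=3$ for $\mathcal{N}_0<H$ checks out, since $4(3-\sin 3)-3(1-\cos 3)^2\approx-0.44<0$ and indeed $\mathcal{N}_0(2.99,3.01)-H(2.99,3.01)\approx-1.3\times10^{-6}$.

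The clause you flagged as the real obstacle is exactly where the paper's proof breaks down. The paper's witness for the failure of $L_0>H$ is the near-diagonal pair $(4.1754412,\,4.175399)$ with the claim $L_0-H<-10^{-9}$; your own estimates refute this. Writing $v(x)=\sqrt{\ln(x^2+1)}$, one has $v'\le 0.134$ on the relevant interval, so $A-L_0=\tfrac12\bigl(v(a)-v(b)\bigr)^2\le\tfrac12\bigl(0.134\cdot 4.22\times10^{-5}\bigr)^2<1.7\times10^{-11}$, while $A-H=\tfrac{(a-b)^2}{2(a+b)}>1.06\times10^{-10}$; hence $L_0-H>8\times10^{-11}>0$ at the paper's pair. (The paper's witness $(4.1,\,4.100000001)$ for the failure of $\mathcal{N}_0>H$ is wrong for the same reason --- there $4(m-\sin m)-m(1-\cos m)^2\approx 9.5>0$ --- but your $m=3$ witness repairs that clause.) Moreover your inequality $(m^2+1)^2\ln(m^2+1)>2m^3$ rules out any near-diagonal witness for $L_0\le H$, and numerically $\sup_{a\ne b}\,(a+b)\bigl(v(a)-v(b)\bigr)^2/(a-b)^2\approx 0.805<1$ (approached as $a\to0$, $b\approx1.98$), which indicates $L_0>H$ for all $a\ne b$, i.e. the second clause of (iii) is likely false as stated. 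So your proposal is not missing an idea: it correctly isolates the one claim that the paper does not actually prove.
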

\begin{proof}Let$\ a\neq b\ .$\\(i):$\
J_{Id}>H\Leftrightarrow\frac{(a-b)^2}{2(a+b)}>\frac{(a-b)^2}{4}\ln\big(1+\frac{2}{a+b}\big)\Leftrightarrow\frac{2}{a+b}>\ln(1+\frac{2}{a+b})\checkmark.$\\
(ii) One counter example is : $\\ J_{Id}(0.5,1)<0.6971<0.7071<G(0.5,1).\ \ $On the other hand,$\
J_{Id}(0.5,0.2)>0.31962>0.31623>G(0.5,0.2).$\\(iii) A counter examples would be: $\
L_0(0.1,0.2)>0.14516>0.14143>G(0.1,0.2).\\ $On the other hand, we have:\\$\
L_0(4.1754412,4.175399)-H(4.1754412,4.175399)<-10^{-9}<0.$\\(iv) Here is a counter example: \[\mathcal{N}_0(0.5,0.2)>0.34713>0.31623>G(0.5,0.2)\].\\ On the other hand, 
$\mathcal{N}_0(4.1,4.100000001)-H(4.1,4.100000001)<-10^{-19}<0.$\end{proof}
\begin{thm}Let $\ M\in\mathcal{R}(\mathbb{R}_{+}^2)\ $ be a
mean function and $\ \varphi:[\alpha_0,\beta_0]\to\mathbb{R}_{+} \
$be an integrable function; that is, $\
\varphi\in\mathcal{R}([\alpha_0,\beta_0])$, where$\
\alpha_0,\beta_0\in\mathbb{R}\ $and$\ \alpha_0<\beta_0.\ $Also,$\
\psi:\mathbb{R}_{+}\to\mathbb{R}_{+}\ $is a Lipschitz function
with the constant of 1.\\
Then $\mathcal{P}_{M,\varphi,\psi}:=\mathcal{P}_{M,\varphi,\psi}(a,b)$, defined in the following way:\\
\[\mathcal{P}_{M,\varphi,\psi}(a,b):= A(a,b)-A(\psi(a),\psi(b))+\frac{1}{(\beta_0-\alpha_0)\bar{\varphi}}\int_{\alpha_0}^{\beta_0}M\big(\varphi(t)\psi(a),\varphi(t)\psi(b)\big)\
    dt\]
is a mean of $a$ and $b$.\end{thm}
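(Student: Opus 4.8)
The plan is to mimic the proof of Theorem 2.6 almost verbatim, replacing the additive shift $\varphi(t)+\psi(a)$ there by the multiplicative rescaling $\varphi(t)\psi(a)$ and exploiting the positive homogeneity of $\min$ and $\max$ in place of their translation invariance. First note that $\psi(\mathbb{R}_{+})\subseteq\mathbb{R}_{+}$ and $\varphi(t)>0$ for all $t$, so each pair $(\varphi(t)\psi(a),\varphi(t)\psi(b))$ lies in $\mathbb{R}_{+}^2$ and $\bar{\varphi}>0$; hence $\mathcal{P}_{M,\varphi,\psi}$ is well defined. As in Theorem 2.6, I take for granted that $t\mapsto M(\varphi(t)\psi(a),\varphi(t)\psi(b))$ is Riemann integrable on $[\alpha_0,\beta_0]$, which is inherited from $M\in\mathcal{R}(\mathbb{R}_{+}^2)$ and $\varphi\in\mathcal{R}([\alpha_0,\beta_0])$; in any case the sandwiching below shows the integrand is squeezed between two integrable functions.

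The first step is to apply condition (a) in the definition of a mean function pointwise in $t$, using $\varphi(t)>0$ to pull the factor $\varphi(t)$ through $\min$ and $\max$:
\[
\varphi(t)\min\{\psi(a),\psi(b)\}\le M\big(\varphi(t)\psi(a),\varphi(t)\psi(b)\big)\le \varphi(t)\max\{\psi(a),\psi(b)\}.
\]
Integrating over $[\alpha_0,\beta_0]$, dividing by $(\beta_0-\alpha_0)\bar{\varphi}$, and using $\frac{1}{(\beta_0-\alpha_0)\bar{\varphi}}\int_{\alpha_0}^{\beta_0}\varphi(t)\,dt=1$, I obtain
\[
\min\{\psi(a),\psi(b)\}\le \frac{1}{(\beta_0-\alpha_0)\bar{\varphi}}\int_{\alpha_0}^{\beta_0}M\big(\varphi(t)\psi(a),\varphi(t)\psi(b)\big)\,dt\le\max\{\psi(a),\psi(b)\}.
\]

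Next I add $A(a,b)-A(\psi(a),\psi(b))$ throughout and write $\max\{\psi(a),\psi(b)\}=A(\psi(a),\psi(b))+\tfrac12|\psi(a)-\psi(b)|$ and $\min\{\psi(a),\psi(b)\}=A(\psi(a),\psi(b))-\tfrac12|\psi(a)-\psi(b)|$, so the $A(\psi(a),\psi(b))$ terms cancel and
\[
A(a,b)-\tfrac12|\psi(a)-\psi(b)|\le \mathcal{P}_{M,\varphi,\psi}(a,b)\le A(a,b)+\tfrac12|\psi(a)-\psi(b)|.
\]
Invoking the Lipschitz hypothesis $|\psi(a)-\psi(b)|\le|a-b|$ gives
\[
\min\{a,b\}=A(a,b)-\tfrac12|a-b|\le \mathcal{P}_{M,\varphi,\psi}(a,b)\le A(a,b)+\tfrac12|a-b|=\max\{a,b\},
\]
which is condition (a). Condition (b) is immediate: since $A$ and $M$ are symmetric in their two arguments, swapping $a\leftrightarrow b$ fixes every term of $\mathcal{P}_{M,\varphi,\psi}$, hence $\mathcal{P}_{M,\varphi,\psi}(a,b)=\mathcal{P}_{M,\varphi,\psi}(b,a)$.

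There is no genuine obstacle here; the argument is a line-by-line analogue of Theorem 2.6. The only points needing a moment's attention are that $\varphi>0$ is used twice — once to ensure $\bar{\varphi}>0$ so the defining formula makes sense, and once to move $\varphi(t)$ in and out of $\min$ and $\max$ without reversing inequalities — and the integrability remark noted above.
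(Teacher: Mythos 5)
Your proof is correct and follows essentially the same route as the paper's: sandwich the normalized integral between $\min\{\psi(a),\psi(b)\}$ and $\max\{\psi(a),\psi(b)\}$, then convert to $A(a,b)\pm\tfrac12|\psi(a)-\psi(b)|$ and finish with the Lipschitz bound $|\psi(a)-\psi(b)|\le|a-b|$. The only difference is that you make explicit the pointwise step $\varphi(t)\min\{\psi(a),\psi(b)\}\le M(\varphi(t)\psi(a),\varphi(t)\psi(b))\le\varphi(t)\max\{\psi(a),\psi(b)\}$ and the normalization $\frac{1}{(\beta_0-\alpha_0)\bar{\varphi}}\int_{\alpha_0}^{\beta_0}\varphi(t)\,dt=1$, which the paper leaves implicit.
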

\begin{proof}
\[\min\{a,b\}=A(a,b)-\frac{|a-b|}{2}\leq A(a,b)-\frac{|\psi(a)-\psi(b)|}{2}=\]\[A(a,b)-A(\psi(a),\psi(b))+\min\{\psi(a),\psi(b)\}\leq
\mathcal{P}_{M,\varphi,\psi}(a,b)\leq...\]\[
A(a,b)-A(\psi(a),\psi(b))+\max\{\psi(a),\psi(b)\}=A(a,b)+\frac{|\psi(a)-\psi(b)|}{2}\]\[\leq
A(a,b)+\frac{|a-b|}{2}=\max\{a,b\}.\]Also, it is obvious that$\
\mathcal{P}_{M,\varphi,\psi}\ $is symmetric.\end{proof}
\begin{remark}
Since$\
\varphi\in\mathcal{R}([\alpha_0,\beta_0])\Leftrightarrow\varphi\circ\eta\in\mathcal{R}([0,1]),\
$\\where$\ \eta(t):=(\beta_0-\alpha_0)t+\alpha_0.\ $So, without
loss of generality, we can assume$\ \alpha_0=0\ $and\\$\
\beta_0=1.$\end{remark}
\begin{remark}If$\ \ \psi(a)=\psi(b),\ \ $then$\ \
\mathcal{P}_{M,\varphi,\psi}(a,b)=A(a,b).$\end{remark}
\begin{remark}If$\
\varphi=c>0\ $($\ c$ is a constant),\ \
then\[\mathcal{P}_{M,c,\psi}=\mathcal{P}_{M,c,\psi}(a,b)={\frac{1}{c}}M(c\psi(a),c\psi(b))-A(\psi(a),\psi(b))+A(a,b).\]\end{remark}
\begin{remark}\[(\mathcal{P}_{M,c,\psi})_{A}=(\mathcal{P}_{M,c,\psi})_{A}(a,b)=-{\frac{1}{c}}M(c\psi(a),c\psi(b))+A(\psi(a),\psi(b))+A(a,b).\]\end{remark}
\begin{remark}
If$\ M\ $is a homogeneous function of order 1; that
is\[M(xz,yz)=zM(x,y),\hskip1cm\forall x,y,z\in\mathbb{R}_+\
,\]then$\ \
\mathcal{P}_{M,\varphi,\psi}(a,b)=A(a,b)-A(\psi(a),\psi(b))+M(\psi(a),\psi(b)).$\end{remark}
\begin{remark}$C,r,g,A,Hn,G\ $and$\ H\ $are homogeneous mean functions
of order 1.\end{remark}
\begin{prop} Let $M,M_1$ and $M_2$ be
mean functions and $\lambda\in\mathbb{R}.$Then\\\\
(i) The following three equations hold:\\
$\mathcal{P}_{A,\varphi,\psi}(a,b)=A(a,b),$\\$\mathcal{P}_{G,\varphi,\psi}(a,b)=A(a,b)-A(\psi(a),\psi(b))+G(\psi(a),\psi(b))=A(a,b)-{\frac{1}{2}}\big(\sqrt{\psi(a)}-\sqrt{\psi(b)}\big)^2$,\\
$(\mathcal{P}_{G,\varphi,\psi})_A(a,b)=A(a,b)+{\frac{1}{2}}
\big(\sqrt{\psi(a)}-\sqrt{\psi(b)}\big)^2,$\\\\(ii)
$M_1>M_2\Rightarrow\mathcal{P}_{M_{1},\varphi,\psi}>\mathcal{P}_{M_{2},\varphi,\psi},\
\ a\neq b,$\\\\In particular, $\
M>(<)A\Rightarrow\mathcal{P}_{M,\varphi,\psi}>(<)A,$\\\\(iii)$\
\mathcal{P}_{\lambda M_1+(1-\lambda)
M_2,\varphi,\psi}=\lambda\mathcal{P}_{M_1,\varphi,\psi}+(1-\lambda)\mathcal{P}_{M_2,\varphi,\psi},\
$if$\ \lambda M_{1}+(1-\lambda) M_{2}\ $is a mean
function.\\\\
In particular,$\
\mathcal{P}_{M_{A},\varphi,\psi}=(\mathcal{P}_{M,\varphi,\psi})_A.$\end{prop}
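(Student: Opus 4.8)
The plan is to reduce each part to elementary properties of the map $M\mapsto M\big(\varphi(t)\psi(a),\varphi(t)\psi(b)\big)$ together with the linearity and monotonicity of $\int_{\alpha_0}^{\beta_0}(\cdot)\,dt$, in the spirit of the earlier propositions on the linearity of $\mathcal I_M$ and $\mathcal S_{M,\varphi,\psi}$. Throughout, note from the definition of $\mathcal P_{M,\varphi,\psi}$ that the summand $A(a,b)-A(\psi(a),\psi(b))$ does not depend on $M$, so all the work concerns the integral term.

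For (i) I would use that for each fixed $t$ the pair $\big(\varphi(t)\psi(a),\varphi(t)\psi(b)\big)$ is $\big(\psi(a),\psi(b)\big)$ scaled by $\varphi(t)>0$, together with the homogeneity of $A$ and $G$ of order $1$: thus $A\big(\varphi(t)\psi(a),\varphi(t)\psi(b)\big)=\varphi(t)A(\psi(a),\psi(b))$ and $G\big(\varphi(t)\psi(a),\varphi(t)\psi(b)\big)=\varphi(t)G(\psi(a),\psi(b))$. (Equivalently one may just quote the homogeneity remark, which already gives $\mathcal P_{M,\varphi,\psi}=A(a,b)-A(\psi(a),\psi(b))+M(\psi(a),\psi(b))$ for $M$ homogeneous of order $1$.) Integrating $\varphi$ over $[\alpha_0,\beta_0]$ and dividing by $(\beta_0-\alpha_0)\bar\varphi$ restores exactly the factor $A(\psi(a),\psi(b))$, respectively $G(\psi(a),\psi(b))$, so $\mathcal P_{A,\varphi,\psi}=A(a,b)$ and $\mathcal P_{G,\varphi,\psi}=A(a,b)-A(\psi(a),\psi(b))+G(\psi(a),\psi(b))$; the alternative closed form is the identity $A(u,v)-G(u,v)=\tfrac12(\sqrt u-\sqrt v)^2$ at $u=\psi(a),\,v=\psi(b)$, and $(\mathcal P_{G,\varphi,\psi})_A=2A-\mathcal P_{G,\varphi,\psi}$ is immediate from Definition 2.1.

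For (ii), fix $a\ne b$. Since $\varphi$ is positive, the arguments $\varphi(t)\psi(a)$ and $\varphi(t)\psi(b)$ are distinct for every $t$ as soon as $\psi(a)\ne\psi(b)$ --- the only case not already forced to equal $A$ by the remark for $\psi(a)=\psi(b)$, and the case that occurs for the injective $\psi$ in all the examples. Then $M_1>M_2$ gives $M_1\big(\varphi(t)\psi(a),\varphi(t)\psi(b)\big)>M_2\big(\varphi(t)\psi(a),\varphi(t)\psi(b)\big)$ for every $t$, and strict monotonicity of the integral over the nondegenerate interval $[\alpha_0,\beta_0]$ yields $\mathcal P_{M_1,\varphi,\psi}>\mathcal P_{M_2,\varphi,\psi}$. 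Taking $(M_1,M_2)=(M,A)$, resp. $(A,M)$, and using $\mathcal P_{A,\varphi,\psi}=A$ from (i) gives the ``in particular'' implication $M>(<)A\Rightarrow\mathcal P_{M,\varphi,\psi}>(<)A$.

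For (iii), the pointwise definition $\big(\lambda M_1+(1-\lambda)M_2\big)\big(\varphi(t)\psi(a),\varphi(t)\psi(b)\big)=\lambda M_1\big(\varphi(t)\psi(a),\varphi(t)\psi(b)\big)+(1-\lambda)M_2\big(\varphi(t)\psi(a),\varphi(t)\psi(b)\big)$, combined with linearity of the integral and with $A(a,b)-A(\psi(a),\psi(b))=\lambda\big(A(a,b)-A(\psi(a),\psi(b))\big)+(1-\lambda)\big(A(a,b)-A(\psi(a),\psi(b))\big)$ for the affine part, gives $\mathcal P_{\lambda M_1+(1-\lambda)M_2,\varphi,\psi}=\lambda\mathcal P_{M_1,\varphi,\psi}+(1-\lambda)\mathcal P_{M_2,\varphi,\psi}$ whenever $\lambda M_1+(1-\lambda)M_2$ is a mean. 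Specialising to $\lambda=2$, $M_1=A$, $M_2=M$ --- legitimate since $M_A=2A-M$ is a mean by the remark after Definition 2.1 --- and using (i) yields $\mathcal P_{M_A,\varphi,\psi}=2A-\mathcal P_{M,\varphi,\psi}=(\mathcal P_{M,\varphi,\psi})_A$. The only delicate point, and the step I would write out most carefully, is the strictness in (ii): it rests on $\varphi$ being positive --- which is precisely what keeps the normalisation $(\beta_0-\alpha_0)\bar\varphi$ nonzero in the definition of $\mathcal P$ --- and on $\psi(a)\ne\psi(b)$, so that $M_1>M_2$ is invoked genuinely off the diagonal; all remaining steps are routine computation.
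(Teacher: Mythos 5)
Your proof is correct and fills in exactly the computation the paper dismisses as ``straightforward'': order-one homogeneity of $A$ and $G$ to collapse the integral term in (i), pointwise comparison plus positivity of the integral for (ii), and linearity of the integral (with the $M$-independent affine part $A(a,b)-A(\psi(a),\psi(b))$ split as a convex combination) for (iii), followed by the specialisation $\lambda=2$, $M_1=A$, $M_2=M$. Your remark that the strict inequality in (ii) genuinely requires $\psi(a)\neq\psi(b)$ --- since otherwise both sides equal $A(a,b)$ by the paper's own remark --- is a correct refinement of the statement as printed.
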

\begin{proof}It is straightforward.\end{proof}
Here are some examples where the above proposition is used: \begin{exm}Let$\ \ M=G\ $and $\
\check{\psi}(t)=\frac{t-\sin t}{2},\
$then\[\mathcal{P}_{G,\varphi,\check{\psi}}=\mathcal{N}_0.\]\end{exm}
\begin{exm}
Let$\ \ M=G\ $and$\ \tilde{\psi}(t)=\ln(t^2+1),\
$then\[\mathcal{P}_{G,\varphi,\tilde{\psi}}=L_0.\] \end{exm}
\begin{exm}
Let$\ \ M=H,\ \ \varphi(t)=Id(t)=t,\
$then\[\mathcal{P}_{H,Id,\psi}(a,b)=A(a,b)-A(\psi(a),\psi(b))+H(\psi(a),\psi(b))=A(a,b)-\frac{\big(\psi(a)-\psi(b)\big)^2}{2\big(\psi(a)+\psi(b)\big)}
.\]In particular,\[\mathcal{P}_{H,Id,Id}(a,b)=H(a,b).\]We can
see\[\mathcal{P}_{H,Id,\psi}(a,b)<\mathcal{S}_{H,Id,\psi}(a,b),\ \
\ \ \ \ a\neq b.\]
\end{exm}

Let$\ M\ $be a mean function. We define\[
\hat{S}_{M}:=\hat{S}_{M}(a,b):=\int_0^{\frac{\pi}{2}}M(a\sin\theta,b\cos\theta)\
d\theta.\]
\\ We can easily see$\ \
\hat{S}_{M}(a,b)=\hat{S}_{M}(b,a).\ $Also,\[
\hat{S}_{M}(a,b)\leq\int_0^{\frac{\pi}{2}}\big(\frac{a\sin\theta+b\cos\theta}{2}+\frac{|a\sin\theta-b\cos\theta|}{2}\big)\
d\theta=\]\[A(a,b)+{1\over2}\int_0^{\tan^{-1}{b\over
a}}(b\cos\theta-a\sin\theta)\
d\theta+{1\over2}\int_{\tan^{-1}{b\over
a}}^{\frac{\pi}{2}}(-b\cos\theta+a\sin\theta)\
d\theta=\]\[\sqrt{a^2+b^2}.\label{8}\]
\[
\hat{S}_{M}(a,b)\geq\int_0^{\frac{\pi}{2}}\big(\frac{a\sin\theta+b\cos\theta}{2}-\frac{|a\sin\theta-b\cos\theta|}{2}\big)\
d\theta=\]\[A(a,b)-{1\over2}\int_0^{\tan^{-1}{b\over
a}}(b\cos\theta-a\sin\theta)\
d\theta-{1\over2}\int_{\tan^{-1}{b\over
a}}^{\frac{\pi}{2}}(-b\cos\theta+a\sin\theta)\
d\theta=\]\[2A(a,b)-\sqrt{a^2+b^2}.\label{9}\] Thus, by (\ref{8}) and
(\ref{9}), we have\be
2A(a,b)-\sqrt{a^2+b^2}\leq\hat{S}_M\leq\sqrt{a^2+b^2}.\label{10}\ee
From (\ref{10}), if we take$\ \xi:=\xi(a,b)>0\ $and$\
\zeta:=\zeta(a,b),\ $such
that\be\min\{a,b\}\leq\frac{(a+b)\xi}{\sqrt{a^2+b^2}}-\xi+\zeta\leq\frac{\xi}{\sqrt{a^2+b^2}}\hat{S}_M+\zeta\leq\xi+\zeta\leq\max\{a,b\},\label{11}\ee
then we will
have\be0<\xi\leq\frac{|a-b|\sqrt{a^2+b^2}}{2\sqrt{a^2+b^2}-(a+b)}\label{12}\ee
and\be\min\{a,b\}+(1-\frac{a+b}{\sqrt{a^2+b^2}})\xi\leq\zeta\leq\max\{a,b\}-\xi\label{13}.\ee
By (\ref{11}), (\ref{12}) and (\ref{13}), we infer\be
S_{M,\xi,\zeta}:=S_{M,\xi,\zeta}(a,b):=\frac{\xi(a,b)}{\sqrt{a^2+b^2}}\hat{S}_M(a,b)+\zeta(a,b)\label{14}\ee
is a mean of $a$ and $b$.\\For example, if we take$\ \xi:=|a-b|\
$and$\
\zeta:={1\over2}\big(\min\{a,b\}+(1-\frac{a+b}{\sqrt{a^2+b^2}})|a-b|+\\\max\{a,b\}-|a-b|\big)=A(a,b)-\frac{|a^2-b^2|}{2\sqrt{a^2+b^2}},\
$then from (\ref{14})\be
S_{M}:=S_{M}(a,b):=A(a,b)-\frac{|a^2-b^2|}{2\sqrt{a^2+b^2}}
+\frac{|a-b|}{\sqrt{a^2+b^2}}\int_0^{\frac{\pi}{2}}M(a\sin\theta,b\cos\theta)\
d\theta\label{15}\ee is a mean of $a$ and $b$. Thus, we will have
the following theorem\\\\{\bf Theorem5}{\it\ \ Let $M$ is a mean
function. Then$\ S_M\ $which is defined by (\ref{15}), is a mean
function.}\\
\begin{prop}\label{11} \ \ Let$\ M,M_1\ $and$\ M_2\
$be mean functions and$\ \lambda\in\mathbb{R}.\ $Then\\\\(i)\
$S_{\lambda M_1+(1-\lambda)M_2}=\lambda
S_{M_1}+(1-\lambda)S_{M_2},\ $if$\ \lambda M_{1}+(1-\lambda)
M_{2}\ $is a mean function.\\\\Specially,\ $S_{M_{A}}=(S_{M})_A,$
\\\\(ii)\
$M_1>M_2\Rightarrow S_{M_1}>S_{M_2},\ \ a\neq b,$\\\\specially,
$M>(<)A\Rightarrow S_M>(<)A.$\end{prop}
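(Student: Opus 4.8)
The plan is to reduce both parts of the proposition to linearity of the integral defining $\hat S_M$ together with the single identity $S_A=A$. The crucial observation is that, by (\ref{15}), $S_M$ depends on $M$ only through $\hat S_M(a,b)=\int_0^{\pi/2}M(a\sin\theta,b\cos\theta)\,d\theta$, via the affine relation $S_M=A(a,b)-\tfrac{|a^2-b^2|}{2\sqrt{a^2+b^2}}+\tfrac{|a-b|}{\sqrt{a^2+b^2}}\,\hat S_M$, whose first two summands are entirely free of $M$; write their sum as $T(a,b)$ and the coefficient $\tfrac{|a-b|}{\sqrt{a^2+b^2}}$ as $c(a,b)$, which is positive when $a\neq b$.

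For (i), since $\big(\lambda M_1+(1-\lambda)M_2\big)(a\sin\theta,b\cos\theta)=\lambda M_1(a\sin\theta,b\cos\theta)+(1-\lambda)M_2(a\sin\theta,b\cos\theta)$ pointwise in $\theta$, linearity of the integral gives $\hat S_{\lambda M_1+(1-\lambda)M_2}=\lambda\hat S_{M_1}+(1-\lambda)\hat S_{M_2}$, whence $\lambda S_{M_1}+(1-\lambda)S_{M_2}=\big(\lambda+(1-\lambda)\big)T+c\big(\lambda\hat S_{M_1}+(1-\lambda)\hat S_{M_2}\big)=T+c\,\hat S_{\lambda M_1+(1-\lambda)M_2}=S_{\lambda M_1+(1-\lambda)M_2}$. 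For the special case I would first record $\hat S_A=\int_0^{\pi/2}\tfrac{a\sin\theta+b\cos\theta}{2}\,d\theta=\tfrac{a+b}{2}=A$, so that $S_A=T+c\cdot\tfrac{a+b}{2}=A-\tfrac{|a^2-b^2|}{2\sqrt{a^2+b^2}}+\tfrac{|a-b|(a+b)}{2\sqrt{a^2+b^2}}=A$, using $|a-b|(a+b)=|a^2-b^2|$ for $a,b>0$. Since $M_A=2A-M$ is the combination $2A+(1-2)M$ with $\lambda=2$, $M_1=A$, $M_2=M$, and $M_A$ is a mean, part (i) then yields $S_{M_A}=2S_A-S_M=2A-S_M=(S_M)_A$.

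For (ii), fix $a\neq b$ and assume $M_1>M_2$, meaning $M_1(x,y)>M_2(x,y)$ whenever $x\neq y$. The function $\theta\mapsto a\sin\theta-b\cos\theta$ vanishes on $[0,\pi/2]$ only at $\theta_0=\tan^{-1}(b/a)$, so the nonnegative integrand $(M_1-M_2)(a\sin\theta,b\cos\theta)$ is strictly positive for every $\theta\neq\theta_0$. A nonnegative integrable function that is positive off a single point has a strictly positive integral, so $\hat S_{M_1}>\hat S_{M_2}$; multiplying by $c(a,b)>0$ and adding back $T(a,b)$ gives $S_{M_1}>S_{M_2}$. Taking $M_2=A$ (respectively $M_1=A$) and using $S_A=A$ then yields $M>A\Rightarrow S_M>A$ and $M<A\Rightarrow S_M<A$.

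The one step deserving care — the main obstacle — is the strictness in (ii): one must justify that a nonnegative Riemann-integrable function on $[0,\pi/2]$ that is positive except at a single point has a strictly positive integral. This holds because, were the integral zero, the function would have to vanish at each of its points of continuity, and these form a set of full measure (a Riemann-integrable function is continuous almost everywhere), contradicting positivity off one point. Everything else is a direct substitution into (\ref{15}) together with the elementary evaluation $\hat S_A=A$.
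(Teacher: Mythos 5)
Your proof is correct and follows the only natural route here, which is the one the paper itself intends: the paper's entire proof reads ``is straightforward,'' and your write-up simply supplies the omitted details --- the affine dependence of $S_M$ on $\hat S_M$ with positive coefficient $\tfrac{|a-b|}{\sqrt{a^2+b^2}}$, linearity of the integral, the evaluation $\hat S_A=A$, and the reduction of $S_{M_A}=(S_M)_A$ to part (i) with $\lambda=2$. Your extra care on strictness in (ii) (a nonnegative Riemann-integrable function positive off a single point has positive integral, via Lebesgue's criterion) is a genuine point the paper glosses over, and your handling of it is sound.
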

\begin{proof} is
straightforward.\end{proof}

\begin{exm}\[\hskip1cm S_A=A,\hskip.7cm S_G(a,b)=A(a,b)+
\frac{G(a,b)|a-b|\Gamma^2({3\over4})}{\sqrt{\pi(a^2+b^2)}}
-\frac{|a^2-b^2|}{2\sqrt{a^2+b^2}},\ \ \ \
\]\[\Gamma(\tfrac{3}{4})\approx1.225416702.\]\end{exm}
\begin{exm}\[
S_H(a,b)=A(a,b)-\frac{|a^2-b^2|(a^2+b^2-4ab)}{2(a^2+b^2)^{3\over2}}-...\]\[\frac{4a^2b^2|a-b|}{(a^2+b^2)^{2}}
\ln\frac{a+b+\sqrt{a^2+b^2}}{\sqrt{2ab}}.\]
\end{exm}
\begin{exm}
By proposition \ref{11} (i)
\[\hskip2cm S_g=A(a,b)+\frac{|a^2-b^2|(a^2+b^2-4ab)}{6(a^2+b^2)^{3\over2}}+...\]\[\frac{4a^2b^2|a-b|}{3(a^2+b^2)^{2}}
\ln\frac{a+b+\sqrt{a^2+b^2}}{\sqrt{2ab}}\]\end{exm}and
\begin{exm}\[\hskip2cm S_C=A(a,b)+\frac{|a^2-b^2|(a^2+b^2-4ab)}{2(a^2+b^2)^{3\over2}}+...\]\[\frac{4a^2b^2|a-b|}{(a^2+b^2)^{2}}
\ln\frac{a+b+\sqrt{a^2+b^2}}{\sqrt{2ab}}.\]
By proposition\ref{11} (ii),
for $a\neq
b$\\\[\bigg(\frac{\sqrt{2(a^2+b^2)}}{|a-b|}\bigg)S_C-\frac{a+b}{\sqrt{2}}\bigg(\frac{\sqrt{a^2+b^2}}{|a-b|}-1\bigg)
\]\[>\int_0^{\frac{\pi}{2}}\sqrt{a^2\sin^2\theta+b^2\cos^2\theta}\
d\theta>\]\[\bigg(\frac{\sqrt{2(a^2+b^2)}}{|a-b|}\bigg)S_g-\frac{a+b}{\sqrt{2}}\bigg(\frac{\sqrt{a^2+b^2}}{|a-b|}-1\bigg)\]and\\
\[S_C(a,b)>S_g(a,b)>A(a,b)>S_G(a,b)>S_H(a,b),\ \ \ \ a\neq b.\]Specially,$\ A(3,4)>S_G(3,4)>S_H(3,4),\ $which we infer
\[\frac{7}{12}>\frac{\Gamma^2(\tfrac{3}{4})}{\sqrt{3\pi}}>\frac{140-48\ln\!6}{125}.\] \end{exm}
\begin{thm} \ Let$\ M_1,
M_2\in\mathcal{R}(\mathbb{R}_{+}^2)\ $be mean functions.\ Then\\\[\mathcal{T}_{M_{1},M_{2}}:=\mathcal{T}_{M_{1},M_{2}}(a,b):=\left\{%
\begin{array}{ll}
    \frac{1}{b-a}\int_a^bM_1\big(M_2(a,b),x\big)\ dx, & \hbox{$a\neq b,$} \\\\
    a, & \hbox{$a=b$} \\
\end{array}%
\right.     \]is a mean of$\ a\ $and$\ b$.\end{thm}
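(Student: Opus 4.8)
The plan is to imitate, almost verbatim, the proof of Theorem 2.1(i): first establish a two–sided bound on the integrand that is uniform in the integration variable, then integrate and divide by $b-a$, and finally verify symmetry directly from the symmetry of $M_2$.

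I would begin by disposing of the degenerate case: when $a=b$ the definition gives $\mathcal{T}_{M_1,M_2}(a,a)=a$, and $\min\{a,a\}=a=\max\{a,a\}$, so both defining properties of a mean hold trivially. Hence assume $a\neq b$, and by the symmetry convention fixed in Section 2 take $b>a$. The one–variable section $x\mapsto M_1\big(M_2(a,b),x\big)$ is integrable on $[a,b]$ by the hypothesis $M_1\in\mathcal{R}(\mathbb{R}_{+}^2)$ (the same integrability remark used implicitly throughout the paper), so $\mathcal{T}_{M_1,M_2}(a,b)$ is well defined.

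The key step is the pointwise estimate. Since $M_2$ is a mean function, $a=\min\{a,b\}\le M_2(a,b)\le\max\{a,b\}=b$, so $M_2(a,b)\in[a,b]$; consequently, for every fixed $x\in[a,b]$ the pair $\big(M_2(a,b),x\big)$ lies in $[a,b]\times[a,b]$, and because $M_1$ is a mean function,
\[a\le\min\{M_2(a,b),x\}\le M_1\big(M_2(a,b),x\big)\le\max\{M_2(a,b),x\}\le b.\]
Integrating this chain of inequalities over $x\in[a,b]$ and dividing by $b-a>0$ yields
\[\min\{a,b\}=a=\frac{1}{b-a}\int_a^b a\,dx\le\mathcal{T}_{M_1,M_2}(a,b)\le\frac{1}{b-a}\int_a^b b\,dx=b=\max\{a,b\},\]
which is condition (a) in the definition of a mean function.

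It remains to check symmetry, condition (b). The quantity $\frac{1}{b-a}\int_a^b(\cdot)\,dx$ is the mean value of the integrand over the interval with endpoints $a$ and $b$, which does not depend on which endpoint is listed first; more explicitly, interchanging $a$ and $b$ in the definition produces $\frac{1}{a-b}\int_b^a M_1\big(M_2(b,a),x\big)\,dx$, and using $M_2(b,a)=M_2(a,b)$ (symmetry of $M_2$) together with $\int_b^a=-\int_a^b$ we recover $\mathcal{T}_{M_1,M_2}(a,b)$; hence $\mathcal{T}_{M_1,M_2}(b,a)=\mathcal{T}_{M_1,M_2}(a,b)$. This completes the proof. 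No step is genuinely hard; the only item deserving a word of care is the integrability of that single section of $M_1$, which I take to be subsumed in the standing assumption $M_1\in\mathcal{R}(\mathbb{R}_{+}^2)$ exactly as in Theorem 2.1.
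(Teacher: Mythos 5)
Your proof is correct, but it finishes the key estimate differently from the paper, and more simply. Both arguments start from the same sandwich $\min\{M_2(a,b),x\}\le M_1\big(M_2(a,b),x\big)\le\max\{M_2(a,b),x\}$. The paper then integrates the outer bounds \emph{exactly}, splitting the integral at $x=M_2(a,b)$ to get $\frac{1}{2(b-a)}\big(b^2+M_2^2(a,b)-2aM_2(a,b)\big)$ for the upper bound, and finishes by showing the quadratic $m_1(t)=b^2+t^2-2at$ is increasing on $[a,b]$, hence bounded by $m_1(b)=2b(b-a)$; the lower bound is handled symmetrically with $m_2(t)=-a^2-t^2+2bt$. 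You instead observe that $M_2(a,b)\in[a,b]$ and $x\in[a,b]$ force $\max\{M_2(a,b),x\}\le b$ and $\min\{M_2(a,b),x\}\ge a$ pointwise, so the integrand is already trapped in $[a,b]$ before any integration. Your route buys brevity and avoids the explicit antiderivatives and the monotonicity check of the quadratics; the paper's route, at the cost of that extra computation, incidentally produces the sharper intermediate bounds $\frac{1}{2(b-a)}\big(-a^2-M_2^2+2bM_2\big)\le\mathcal{T}_{M_1,M_2}\le\frac{1}{2(b-a)}\big(b^2+M_2^2-2aM_2\big)$, which record how close $\mathcal{T}_{M_1,M_2}$ can be to the endpoints as a function of $M_2(a,b)$. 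Your symmetry verification and your handling of the $a=b$ case are fine and match the paper's (the latter states symmetry as obvious).
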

\begin{proof}
Let$\ b>a\ $and$\ x\in[a,b].\ $We have\[
\frac{1}{b-a}\int_a^bM_1\big(M_2(a,b),x\big)\
dx\leq\frac{1}{b-a}\int_a^b\max\{M_2(a,b),x\}\
dx=\]\[\frac{1}{b-a}\int_a^{M_2(a,b)}M_2(a,b)\
dx+\frac{1}{b-a}\int_{M_2(a,b)}^bx\
dx=...\]\[\frac{1}{2(b-a)}\big(b^2+M_2^2(a,b)-2aM_2(a,b)\big).\label{p1}\].If
we take$\ m_1(t):=b^2+t^2-2at,\ \ t\in[a,b],\ $then$\ m_1\ $will
be increasing on$\ [a,b].\ $So, $m_1(t)\leq m_1(b)=2b(b-a),\
$for$\ t\in[a,b].$ Hence, from (\ref{p1}), we will
get\[\frac{1}{b-a}\int_a^bM_1\big(M_2(a,b),x\big)\ dx\leq
b.\]Similarly,\[ \frac{1}{b-a}\int_a^bM_1\big(M_2(a,b),x\big)\
dx\geq\frac{1}{b-a}\int_a^b\min\{M_2(a,b),x\}\
dx=\]\[\frac{1}{b-a}\int_a^{M_2(a,b)}x\
dx+\frac{1}{b-a}\int_{M_2(a,b)}^bM_2(a,b)\
dx=...\]\[\frac{1}{2(b-a)}\big(-a^2-M_2^2(a,b)+2bM_2(a,b)\big).\label{p2}.\]
If
we take$\ m_2(t):=-a^2-t^2+2bt,\ \ t\in[a,b],\ $then$\ m_2\ $will
be increasing on$\ [a,b].\ $ So, $m_2(t)\geq m_2(a)=2a(b-a),\
$for$\ t\in[a,b].$ Therefore, from (\ref{p2}), we will
get\[\frac{1}{b-a}\int_a^bM_1\big(M_2(a,b),x\big)\ dx\geq
a.\]Also, it is obvious that$\ \mathcal{T}_{M_{1},M_{2}}\ $ is
symmetric.\end{proof}
\begin{prop}
\label{12}\ Let$\
M_1,M_1^{'},M_2\ $and$\ M_2^{'}\ $be mean functions and
$\lambda\in\mathbb{R}$.\ Then\\\\(i)$\
M_1>M_1^{'}\Rightarrow\mathcal{T}_{M_{1},M_{2}}(a,b)>\mathcal{T}_{M_{1}^{'},M_{2}}(a,b),\
\ a\neq b,$\\\\(ii) If$\ M_1\ $is strictly increasing and$\
M_2>M_2^{'},\ \ $then \[
\mathcal{T}_{M_{1},M_{2}}(a,b)>\mathcal{T}_{M_{1},M_{2}^{'}}(a,b),\
\ a\neq b,\]\\\\(iii)\ $\mathcal{T}_{\lambda M_{1}+(1-\lambda)
M_{1}^{'},M_{2}}=\lambda\mathcal{T}_{M_{1},M_{2}}+(1-\lambda)\mathcal{T}_{M_{1}^{'},M_{2}},\
$\\if$\ \lambda M_{1}+(1-\lambda) M_1^{'}\ $is a mean
function.\\Specially,
$\mathcal{T}_{M_{1A},M_{2}}=2\mathcal{T}_{A,M_{2}}-\mathcal{T}_{M_{1},M_{2}}$.\end{prop}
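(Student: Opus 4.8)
The plan is to reduce all three parts to the linearity and monotonicity of the Riemann integral, exploiting the fact that, once $a,b$ are fixed with $a\neq b$, the quantity $\mathcal{T}_{M_1,M_2}(a,b)$ is nothing but the average of the function $x\mapsto M_1\big(M_2(a,b),x\big)$ over $[a,b]$, and $M_2(a,b)$ is a single fixed number lying in $[\min\{a,b\},\max\{a,b\}]$.

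For (i), fix $a\neq b$ and, by symmetry, assume $b>a$; set $c:=M_2(a,b)\in[a,b]$. Since $M_1>M_1'$ as mean functions, $M_1(c,x)>M_1'(c,x)$ for every $x\neq c$, hence for every $x\in[a,b]\setminus\{c\}$; the exceptional point $x=c$ is a single point, so monotonicity of the integral gives $\int_a^b M_1(c,x)\,dx>\int_a^b M_1'(c,x)\,dx$, and dividing by $b-a>0$ yields $\mathcal{T}_{M_1,M_2}(a,b)>\mathcal{T}_{M_1',M_2}(a,b)$. For (ii), again take $b>a$ and write $c:=M_2(a,b)$, $c':=M_2'(a,b)$; the hypothesis $M_2>M_2'$ forces $c>c'$, and since $M_1$ is strictly increasing (in its first argument) we get $M_1(c,x)>M_1(c',x)$ for \emph{every} $x\in[a,b]$, not merely almost every one, so integrating over $[a,b]$ and dividing by $b-a$ gives the strict inequality at once. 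One only needs to observe that $c,c'\in[a,b]$, so both integrands are defined on $\mathbb{R}_+^2$.

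For (iii), the integrand attached to $\lambda M_1+(1-\lambda)M_1'$ is, pointwise in $x$,
\[(\lambda M_1+(1-\lambda)M_1')\big(M_2(a,b),x\big)=\lambda\,M_1\big(M_2(a,b),x\big)+(1-\lambda)\,M_1'\big(M_2(a,b),x\big),\]
so linearity of $\int_a^b(\cdot)\,dx$ splits $\mathcal{T}_{\lambda M_1+(1-\lambda)M_1',M_2}(a,b)$ into $\lambda\mathcal{T}_{M_1,M_2}(a,b)+(1-\lambda)\mathcal{T}_{M_1',M_2}(a,b)$ whenever $a\neq b$; when $a=b$ both sides equal $a$ because $\lambda+(1-\lambda)=1$. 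The standing hypothesis that $\lambda M_1+(1-\lambda)M_1'$ is itself a mean function is exactly what makes the left-hand side well defined. Finally, specializing to $\lambda=-1$ and $M_1'=A$ turns $\lambda M_1+(1-\lambda)M_1'$ into $2A-M_1=M_{1A}$, and the identity just proved becomes $\mathcal{T}_{M_{1A},M_2}=2\mathcal{T}_{A,M_2}-\mathcal{T}_{M_1,M_2}$.

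I do not expect any real obstacle here: the argument is entirely elementary. The only two points that deserve an explicit word are the measure-zero exceptional set $\{x=M_2(a,b)\}$ in part (i) (so that ``$>$ pointwise off a single point'' still upgrades to ``$>$ for the integrals''), and the separate, trivial treatment of the diagonal $a=b$ in part (iii); everything else is linearity and monotonicity of the integral together with the hypotheses as stated.
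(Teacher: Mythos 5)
Your proof is correct and follows the natural route the paper has in mind (the paper itself only writes ``is straightforward'' here), namely linearity and monotonicity of the integral applied to the integrand $x\mapsto M_1\big(M_2(a,b),x\big)$. Your one extra care point --- that in (i) the strict inequality $M_1(c,x)>M_1'(c,x)$ fails at the single point $x=c=M_2(a,b)$ because every mean function satisfies $M(c,c)=c$, but this measure-zero exception does not spoil strictness of the integral inequality --- is exactly the detail worth making explicit, and you handle it properly.
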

\begin{proof}\ is straightforward.\end{proof}
{\bf Some Examples} \ Let$\
M\ $be a mean
function.\[(1)\hskip6cm\mathcal{T}_{A,M}(a,b)=A\big(M(a,b),A(a,b)\big).\]
\[(2)\hskip5.3cm\mathcal{T}_{G,M}(a,b)=G\big(M(a,b),g^2(\sqrt{a},\sqrt{b})\big).\]
\[(3)\hskip1.3cm\mathcal{T}_{H,M}(a,b)=2M(a,b)\bigg(1-\frac{M(a,b)}{b-a}\ln\frac{b+M(a,b)}{a+M(a,b)}\bigg),\ \ a\neq
b.\]
\[(4)\hskip1.5cm\mathcal{T}_{r,M}(a,b)=\frac{1}{2\sqrt{2}}\bigg(\frac{(a+b)\big(a^2+b^2+M^2(a,b)\big)}{\big(a\sqrt{a^2+M^2(a,b)}+b\sqrt{b^2+M^2(a,b)}\big)}+\]\[
\frac{M^2(a,b)}{(b-a)}\ln\frac{b+\sqrt{b^2+M^2(a,b)}}{a+\sqrt{a^2+M^2(a,b)}}\bigg),\
\ a\neq b.\]
 By proposition \ref{12}, we will get
\[\mathcal{T}_{r,M}(a,b)>\mathcal{T}_{A,M}(a,b)>\mathcal{T}_{G,M}(a,b)>\mathcal{T}_{H,M}(a,b),\ \ \
\ a\neq b.\]Also, by proposition\ref{12}(ii) and note12, we will
have\[\mathcal{T}_{A,A}(a,b)>\mathcal{T}_{A,G}(a,b)>\mathcal{T}_{A,H}(a,b),\
\ \ \ a\neq
b,\]\[\mathcal{T}_{G,A}(a,b)>\mathcal{T}_{G,G}(a,b)>\mathcal{T}_{G,H}(a,b),\
\ \ \ a\neq
b\]and\[\mathcal{T}_{H,A}(a,b)>\mathcal{T}_{H,G}(a,b)>\mathcal{T}_{H,H}(a,b),\
\ \ \ a\neq b.\]Besides, by proposition\ref{12}(iii), we will have
\[\mathcal{T}_{H_n,M}=\tfrac{2}{3}\mathcal{T}_{A,M}+\tfrac{1}{3}\mathcal{T}_{G,M},\]
\[\mathcal{T}_{g,M}=\tfrac{4}{3}\mathcal{T}_{A,M}-\tfrac{1}{3}\mathcal{T}_{H,M}\]and
\[\mathcal{T}_{(M_1)_{A},M_2}=2\mathcal{T}_{A,M_2}-\mathcal{T}_{M_1,M_2},\]if$\ M,M_1\
$and$\ M_2\ $are mean functions.

\bibliographystyle{amsplain}

\end{document}

%% To be filled in the journal office:

@author:
@affiliation:
@title:
@language: English
@pages:
@classification1:
@classification2:
@keywords:
@abstract:
@filename:
@EOI